\def\BibTeX{{\rm B\kern-.05em{\sc i\kern-.025em b}\kern-.08em
    T\kern-.1667em\lower.7ex\hbox{E}\kern-.125emX}}
\newcommand{\Aut}{\text{Aut}}
\newcommand{\aut}{\mathfrak{aut}}
\newcommand{\Inn}{\text{Inn}}
\newcommand{\inn}{\mathfrak{inn}}
\newcommand{\Out}{\text{Out}}
\newcommand{\GL}{\text{GL}}
\newcommand{\SO}{\text{SO}}
\newcommand{\SIM}{\text{SIM}}
\newcommand{\TFG}{\text{TFG}}
\newcommand{\tfg}{\mathfrak{tfg}}
\newcommand{\tr}{\text{tr}}
\newcommand{\hatbar}[1]{\hat{\bar{#1}}}
\newtheorem{assumption}{Assumption}
\newtheorem{remark}{Remark}
\newtheorem{proposition}{Proposition}
\newtheorem{lemma}{Lemma}
\newtheorem{theorem}{Theorem}
\def\ps@titlepagestyle{
  \def\@oddfoot{}\def\@evenfoot{}

  \if@confmode
    \def\@oddhead{}\def\@evenhead{}
  \else 
    \def\@oddhead{
      \vbox{
        \hbox to \textwidth{\scriptsize\hfil\leftmark\quad\thepage}
        \vskip 2pt
        \hrule height 0.4pt depth 0pt width \textwidth
      }
    }
    
    \def\@evenhead{
      \vbox{
        \hbox to \textwidth{\scriptsize\thepage\quad\rightmark\hfil}
        \vskip 2pt
        \hrule height 0.4pt depth 0pt width \textwidth
      }
    }
  \fi
  
  \if@draftclsmode
     \def\@oddhead{
       \vbox{
         \hbox to \textwidth{\scriptsize\leftmark \hfil \thepage}
         \vskip 2pt
         \hrule height 0.4pt depth 0pt width \textwidth
       }
     }
     \def\@evenhead{
       \vbox{
         \hbox to \textwidth{\scriptsize\thepage \hfil \leftmark}
         \vskip 2pt
         \hrule height 0.4pt depth 0pt width \textwidth
       }
     }
     \def\@oddfoot{\hfill\makebox[0pt][l]{\scriptsize DRAFT \today}\hfill}
     \def\@evenfoot{\hfill\makebox[0pt][r]{\scriptsize DRAFT \today}\hfill}
  \fi
}
\begin{document}

\title{Global Observer Design for a Class of Linear Observed Systems on Groups}
\author{Changwu Liu and Yuan Shen, \IEEEmembership{Senior Member, IEEE}
\thanks{The authors are with Department of Electronic Engineering, Tsinghua University, 100084 Beijing, China (e-mail: liucw\_ee@tsinghua.edu.cn; shenyuan\_ee@tsinghua.edu.cn).}
}

\maketitle

\begin{abstract}
    Linear observed systems on groups encode the geometry of a variety of practical state estimation problems. In this paper, we propose an observer framework for a class of linear observed systems by restricting a bi-invariant system on a Lie group to its normal subgroup. This structural property enables a system embedding of the original system into a linear time-varying system. An observer is constructed by first designing a Kalman-like observer for the embedded system and then reconstructing the group-valued state via optimization. Under an extrinsic observability rank condition, global exponential stability (GES) is achieved provided that one global optimum of the reconstruction optimization is found, reflecting the topological difficulties inherent to the non-Euclidean state space. Semi-global stability is guaranteed when input biases are jointly estimated. The theory is applied to the GES observer design for two-frame systems, capable of modeling a family of navigation problems. Simulations are provided to illustrate the implementation details. 
\end{abstract}

\begin{IEEEkeywords}
    Asymptotic Nonlinear Observers, Geometric Methods, Kalman Filters, Navigation, State Estimation, Systems on Lie Groups.
\end{IEEEkeywords}

\section{Introduction}
\label{sec::introduction}

\IEEEPARstart{L}{inear} observed systems on groups are systems whose flows are compatible with group automorphisms \cite{LoS, PreIntMath, LoS_Mfd}. State estimation for those systems is of particular interest to the robotics and automation community, as it captures the geometry of many practical problems \cite{TFG, InEKF, AnnuRevInEKF, filtering_lie_groups}. Prototypical examples include the navigation of rigid bodies using multi-sensor information \cite{TFG,rotating_earth, EqVIO, MSCEqF, InGVIO, ExploitSymm, InEKF_wheel_gnss_arm, IEKF-wind}, i.e., estimating the attitudes, positions and linear velocities of some moving objects. Those variables are naturally combined together as transformations between coordinate frames. The set of transformations is a closed subgroup of the general linear group and hence is endowed with a natural Lie group structure. The success of these applications is attributed to respecting geometry, i.e., utilizing the group formulation of the state space as well as system properties mainly linked to linear observed structures.

An observer, or state estimator, is a dynamical system driven by known inputs and measurements \cite{obs_design_nls}. Asymptotic stability is the central objective in observer design. The main difficulties hindering the guarantee of stability for observers for general linear observed systems on groups are two-fold: the nonlinearities of the system equations and the non-Euclidean state space topology. The first category of observers for linear observed systems is based on linearization that respects the geometry. The invariant filter (InEKF) linearizes the system in exponential coordinates to obtain state-independent error dynamics, leveraging the linear observed structure. With gains computed from a Riccati equation akin to that in the Kalman filter using error dynamics, the InEKF achieves local stability under conditions similar to those of Kalman filters \cite{InEKF, AnnuRevInEKF}. Equivariant filter (EqF) considers a wider range of systems on homogeneous spaces, encompassing linear observed systems on groups. The EqF lifts the system to its symmetry group acting on the base manifold and designs an observer utilizing the invariant group error \cite{EqF, Annurev_EqF}. Again, its gain is obtained from a Riccati equation whose coefficients follow from linearization in the coordinates of the base manifold. Although InEKF and EqF are designed for general systems, their stability domains are only local due to linearization. The second category of observers is characterized by constructive approaches that achieve almost-global stability \cite{synchronous_observer_design, AutoErrorGroupAffine}. This is the best one could achieve when designing the correction using a continuous vector field, originating from topological obstructions \cite{topological_obstruction}, when the state space contains a non-contractible component. To the best of our knowledge, there is no out-of-the-box, easy-to-implement observer that achieves global exponential stability (GES) for general linear observed systems on groups.

There are numerous case-by-case studies for global observer design in rigid-body navigation. In addition to classical constructive almost-globally stable observers for attitude estimation \cite{cf_att,improved_cf_att}, recently proposed hybrid observers for IMU-based navigation with landmark or vision-type measurements are discussed in \cite{hybrid_observer_landmark, hybrid_observer_vision_aided, ins_observer_tutorial}. Such observers consist of continuous flows and discrete jumps and achieve GES, thereby overcoming topological obstructions. This construction relies heavily on the matrix group representation of IMU dynamics and corresponding innovations, both of which are closely related to the linear observed structure. In contrast to these constructive methods, switching to robocentric coordinates using body-referenced linear velocities and landmark positions simplifies the system model, yielding a linear time-varying (LTV) system \cite{att_scalar,body_frame_state_estimation,geometric_approach_imu_body, ins_ltv}. A Riccati observer \cite{riccati_observer_pnp} is designed for these auxiliary LTV systems, and the original states are reconstructed subsequently. Strong GES guarantees are obtained under persistent excitation. The above methods focus on specific examples, representing only the tip of the iceberg of the systems capable of being modeled by linear observed formulation. Moreover, the success of these key techniques implicitly relies on linear observed structures. The feasibility of these techniques for general linear observed systems on groups has not been studied.

In this paper, we delve into the linear observed structure and reveal a powerful connection between this structure and the possibility of an LTV embedding of linear observed systems on groups. A global observer framework follows naturally. Our contributions are summarized below.

\begin{itemize}
    \item Group-theoretic conditions allowing LTV embedding of linear observed systems on groups are established. The class of systems arising from the restriction of a bi-invariant system to any normal subgroup of the state space can be embedded into LTV systems.
    \item An out-of-the-box observer is proposed for embeddable linear observed systems on groups. If an extrinsic observability rank condition is satisfied, GES is achieved provided that one global optimum of the related optimization on the group can be found, reflecting the topological difficulties. The gap in observability assumptions required to achieve stability compared with the InEKF is studied. Joint estimation of input bias is also tackled.
    \item The observer framework is applied to two-frame systems \cite{TFG}. Implementations of GES observers for embeddable two-frame systems are provided with additional extension to bearing and range measurements.
    \item Simulations are provided to illustrate tuning and performance compared with the InEKF.
\end{itemize}

For the rest of the paper, Section~\ref{sec::preliminaries} reviews mathematical preliminaries. The properties allowing LTV embedding are discussed in Section~\ref{sec::properties_allowing_ltv_embedding} after illustrative introductory examples in Section~\ref{sec::intro_examples}. Our GES observer is detailed in Section~\ref{sec::global_observer_using_ltv_embedding}. A prototypical application of the theory to embeddable two-frame systems is discussed in Section~\ref{sec::two_frame_systems}. Simulations are provided in Section~\ref{sec::simulation_examples}.

\section{Preliminaries}
\label{sec::preliminaries}

\subsection{Notations}

$\mathbb{R}^n$ denotes an $n$-dimensional $\mathbb{R}$-vector space and $\mathbb{S}^{n}$ is the unit $n$-sphere. Lowercase and capital letters represent vectors and matrices respectively. We use $\Vert\cdot\Vert$ for the Euclidean norm of a vector or the Frobenius norm of a matrix. $\preceq,\ \succeq,\ \prec,\ \succ$ denote the partial order on symmetric matrices. $\mathbb{S}^n_+$ is the cone of symmetric positive definite $\mathbb{R}^{n\times n}$-matrices. $\hat{(\cdot)}$ and $(\cdot)$ denote the estimated and true states, respectively. $\text{diag}(\cdot)$ denotes the diagonal or block-diagonal matrix.

\subsection{Group Theory Basics}

We are interested in matrix Lie groups, namely the closed subgroups of $\GL(n,\mathbb{R})$, which are invertible $\mathbb{R}^{n\times n}$ matrices under multiplication. Let $\Aut(G),\Inn(G),\Out(G)$ be the automorphism, inner-automorphism and outer-automorphism groups, respectively. The rotation group in $\mathbb{R}^d$ is $\SO(d)=\left\{R\in\GL(d,\mathbb{R})\left|R^\top R=I_{d\times d},\det(R)=1\right.\right\}$. $\mathbb{R}^{n\times n}$ provides a global embedding for matrix groups as well as for their Lie algebras. Let $\mathcal{L}_{\mathfrak{g}}:\mathbb{R}^{\dim\mathfrak{g}}\rightarrow\mathfrak{g}\subset\mathbb{R}^{n\times n}$ be the Lie algebra isomorphism between a vector space and its matrix representation, e.g., the skew-symmetric matrix $\mathcal{L}_{\mathfrak{so}(3)}(x_1)x_2:=x_1\times x_2$ for all $x_1,x_2\in\mathbb{R}^3$ is defined by the $\mathbb{R}^3$ cross product. We also use the compact notation $(\cdot)^\times:=\mathcal{L}_{\mathfrak{so}(d)}(\cdot)$. Multiplication of a matrix group on a vector constitutes a linear action of the group under its natural representation. The reader is referred to \cite{GTM218, GTM222} for a thorough mathematical preparation.

\subsection{Error-State Extended Kalman Observer}\label{subsec::ekf}

Let $\dot{x}=f(x,u),\ y=h(x)$ be a system defined on $\mathbb{R}^n$. $f$ and $h$ are mappings with sufficient smoothness. A continuous-time error-state extended Kalman observer for this system is
\begin{equation}
    \dot{\hat{x}}=f(\hat{x},u)+K(y-h(\hat{x})),
\end{equation}
where $\hat{x}$ is the state estimate and $K$ is the gain. The Kalman methodology suggests $K=PH^\top\mathcal{R}^{-1}$, calculated from the Riccati ODE, where all matrices depend smoothly on time:
\begin{equation}
    \dot{P}=FP+PF^\top+\mathcal{G}Q\mathcal{G}^\top-PH^\top\mathcal{R}^{-1}HP.
\end{equation}
Although the observer herein is fully deterministic, a stochastic interpretation is helpful for its tuning. To be precise, let $n_1$ and $n_2$ be white Gaussian noises associated with the input $u$ and the output $y$, respectively. We wish to linearize the system around the current estimate $\hat{x}$. Define $x=\hat{x}+\delta x$, where $\delta{x}$ is considered an error state. We substitute $x$ into $\dot{x}=f(x,u+n_1),\ y=h(x)+n_2$, and obtain a first-order approximation
\begin{align}
    \delta\dot{x}&=\frac{\partial f}{\partial x}(\hat{x},u)\delta x+\frac{\partial f}{\partial u}(\hat{x},u)n_1\\
    y-h(\hat{x})&=\frac{\partial h}{\partial x}(\hat{x})\delta{x}+n_2.
\end{align}
The configuration for the observer is set to
\begin{equation}
    F:=\frac{\partial f}{\partial x}(\hat{x}, u),\ \mathcal{G}:=\frac{\partial f}{\partial u}(\hat{x}, u),\ H:=\frac{\partial h}{\partial x}(\hat{x}),
\end{equation}
and $Q:=\text{cov}(n_1)$ and $\mathcal{R}:=\text{cov}(n_2)$.

In later sections, we specify such an observer by providing formulas for the tuple $(F,\mathcal{G},H)$.

\subsection{Observability and Matrix Riccati/Lyapunov Equations}

Consider the LTV system $\dot{x}_{t}=A_tx_{t}+B_tu_{t},\ y_{t}=H_tx_{t}$ where the matrices and vectors are of compatible dimensions. Let $\Phi(t_2,t_1)$ be the state transition matrix satisfying $\frac{\partial\Phi(t_2,t_1)}{\partial t_2}=A_{t_2}\Phi(t_2,t_1)$ with initial condition $\Phi(t_1,t_1)=I$. The observability Gramian \cite{linear_systems_theory, riccati_observer_pnp} is $\mathcal{O}(t_2,t_1):=\int_{t_1}^{t_2}\Phi^\top(\tau,t_1)H_{\tau}^\top \mathcal{R}_\tau^{-1}H_\tau\Phi(\tau, t_1)d\tau$. The system is uniformly observable if there exist constants $\delta, \alpha>0$ such that $\mathcal{O}(t+\delta,t)\succeq\alpha I$ holds for every $t\in\mathbb{R}$ \cite{linear_systems_theory}. If the LTV system is uniformly controllable and observable, then the eigenvalues of $P$ are uniformly lower and upper bounded \cite{boundedness_riccati_ode}, i.e., there exist $p_m,p_M>0$ such that $p_mI\preceq P_t\preceq p_MI$ for all $t$. Uniform observability is related to the lower bound $p_m$ specifically.

\section{Introductory Examples}\label{sec::intro_examples}

Examples are provided to illustrate the motivation and pipeline of the proposed embedding-based observer. Two central problems are the LTV embeddability of our nonlinear system and the ability to reconstruct the original state from the estimate of the embedded linear system. The former is closely linked to a specific structure of the system, and the latter is connected to the observability of the system.

\subsection{Embedding-based Observer for LTI Systems}\label{subsec::embedding_lti}

Let $x_t\in\mathbb{R}^n,\ u_t\in\mathbb{R}^m,\ y_t\in\mathbb{R}^l$ be the state, the input and the output, respectively. Consider an LTI system in the form of $\dot{x}_t=Ax_t+Bu_t,\ y_t=Hx_t$, where the matrices are of compatible dimensions and the subscript $t$ indicates that the corresponding variable smoothly depends on time. Let $z_j=HA^jx\in\mathbb{R}^n$ for $j\in\mathbb{N}$. We have an inductive ODE as
\begin{equation}
    \label{eq::lti_zj1}
    \dot{z}_j=z_{j+1}+HA^jBu_t.
\end{equation}
This inductive definition of $z_j$ will terminate after finitely many steps by virtue of the Cayley-Hamilton theorem applied to the linear operator $A$, as
\begin{equation}
    z_{n}=HA^{n}x=H\left(\sum_{l=0}^{n-1}a_lA^l\right)x=\sum_{l=0}^{n-1}a_lz_l,
\end{equation}
where $a_l\in\mathbb{R}$ come from the coefficients of the characteristic polynomial of $A$. Hence, the construction terminates at
\begin{equation}
    \label{eq::lti_zj2}
    \dot{z}_{n-1}=\sum_{l=0}^{n-1}a_lz_l+HA^{n-1}Bu_t.
\end{equation}
This $n^2$-dimensional system \eqref{eq::lti_zj1} and \eqref{eq::lti_zj2} with output $y=z_0$ is linear and fully observable, because it's in a companion form. Moreover, this system has the identical input-output behavior as the original system regarding $x$. We call such a system an embedding of the original system. Unlike immersion, which is valid only in an open neighborhood, the embedding here is a global concept in the entire state space. Immersibility is traditionally related to an observability rank condition of the system \cite{nonlinear_immersion2}. In contrast, the embedding of an LTI system is a consequence of the linear structure and does not require any observability a priori.

It's natural to apply a linear observer to obtain an estimate $\hat{z}$ of the embedded state. We are left to reconstruct the original state from $\hat{z}$, but this is just solving a linear equation
\begin{equation}
    \begin{bmatrix}
        \hat{z}_0\\
        \hat{z}_1\\
        \vdots\\
        \hat{z}_{n-1}
    \end{bmatrix}=\begin{bmatrix}
        H\\
        HA\\
        \vdots\\
        HA^{n-1}
    \end{bmatrix}x,
\end{equation}
whose coefficient matrix is exactly the Kalman observability matrix. Unique reconstruction of $x$ is possible if and only if the original system is observable.

In summary, two key aspects of the embedding-based observer for LTI systems are
\begin{enumerate}
        \renewcommand{\labelenumi}{(\arabic{enumi})}
        \item the linear structure of the drift vector field $Ax$ leads to a global embedding into a fully observable linear system by applying the Cayley-Hamilton theorem on $A$;
        \item the observability is transferred into the ability to reconstruct the original state from the embedding.
\end{enumerate}

A linear observed system with state $\chi$ on a group $G$ 
\begin{alignat}{3}
    \label{eq::los_lti1}
    \dot{\chi}_t=&f(\chi_t)+\sum_{l=1}^n &&u_lY_l(\chi_t),\quad y_t=&&h(\chi_t),\\
    \notag
    &\Updownarrow &&\Updownarrow && \Updownarrow\\
    \label{eq::los_lti2}
    \dot{x}_t=&Ax_t\quad\ +&&Bu_t,\qquad\quad y_t=&&Hx_t,
\end{alignat}
generalizes linear systems on $\mathbb{R}^n$ in the following aspects. 
\begin{enumerate}
    \renewcommand{\labelenumi}{(\arabic{enumi})}
    \item The vector fields corresponding to the drift term constitute the Lie algebra of the automorphism of the state space. To be precise, $\Aut(\mathbb{R}^n)\cong\GL(n,\mathbb{R})$, and $A\in\mathbb{R}^{n\times n}\cong\mathfrak{gl}(n,\mathbb{R})$. Similarly, we have $f(\chi_t)\in\aut(G)$, which is the Lie algebra of $\Aut(G)$.
    \item The vector fields corresponding to the input are invariant vector fields on the state space. $Bu_t$ is constant on $\mathbb{R}^n$, and thus invariant. The $Y_j(\chi_t)$ are left- or right-invariant.
    \item The outputs are group-morphisms to a vector space. Specifically, $Hx$ is linear in $x$, and $h(\chi)$ is defined as a linear action on a known vector.
\end{enumerate}

Mimicking the linear mechanisms, we wish to
\begin{enumerate}
    \renewcommand{\labelenumi}{(\arabic{enumi})}
    \item exploit the linear observed structure and embed \eqref{eq::los_lti1} into a linear system using the Cayley-Hamilton theorem for some linear operator related to $\aut(G)$;
    \item establish an observability condition to reconstruct the group-valued state from the embedded state. 
\end{enumerate}

It is the main goal of the following sections to provide rigorous technical details for these analogies. Before delving into embeddability details, we demonstrate this pipeline using two classical examples of linear observed systems.

\subsection{Embedding-based Attitude Observer}
\label{subsec::att_observer}

Let $R_t\in\SO(3)$ be the state. Let $y_t^{(i)}\in\mathbb{R}^3$ be the $i$-th output and $d_i\in\mathbb{R}^3$ be some known vector. The system equations are
\begin{equation}
    \dot{R}_t=R_t\omega_t^\times,\ \ y_t^{(i)}=R_t^{-1}d_i,\ \ i=1,2,...,N,
\end{equation}
where $\omega_t\in\mathbb{R}^3$ is the gyroscope input. Defining $z_0^{(i)}=R_t^{-1}d_i$ for $1\le i\le N$, we obtain the embedded LTV system as
\begin{equation}
    \label{eq::emb_sys_att_obs}
    \left\{\begin{aligned}
        \dot{z}_0^{(i)}&=-(R_t^{-1}\dot{R}_t)(R_td_i)=-\omega_t^\times z_0^{(i)}\\
        y_t^{(i)}&=z_0^{(i)}
    \end{aligned}
    \right..
\end{equation}
It is clear that \eqref{eq::emb_sys_att_obs} is (Kalman) observable for any $N$. Suppose we apply a Kalman filter to \eqref{eq::emb_sys_att_obs} and obtain estimates of $z_0^{(i)}$, we then reconstruct an estimate of the attitude by
\begin{equation}
    \label{eq::opt_att}
    \hat{R}=\mathop{\arg\min}_{\chi\in\SO(3)}\Vert \hat{Z}-\chi^{-1} D\Vert^2=VSU^\top,
\end{equation}
where $\hat{Z}=[\hat{z}_0^{(1)},...,\hat{z}_0^{(N)}]\in\mathbb{R}^{3\times N}$ and $D=[d_1,...,d_N]\in\mathbb{R}^{3\times N}$. Using the Umeyama algorithm \cite{umeyama_algorithm}, \eqref{eq::opt_att} can be solved in closed form as $\hat{R}=VSU^\top$,
where $\hat{Z}D^\top=U\Lambda V^\top$ is a singular value decomposition. $\Lambda\in\mathbb{R}^{3\times 3}$ is diagonal, and $U,V\in\text{O}(3)$. $S=\text{diag}(1,1,\det(UV))$. In practice, the optimization cost may be weighted for scaling reasons.

To guarantee the uniqueness of the solution $\hat{R}=VSU^\top$ of \eqref{eq::opt_att}, we require $\text{rank}(D)=3$, which is the number of columns of $R$. This means we need at least three linearly independent vectors $d_i$. In general, the state reconstruction requirement is slightly stronger than the intrinsic observability assumption in invariant filtering, where we require only two linearly independent vectors $d_i$. Fortunately, a remedy is possible in our embedding-based observer if we only have two linear-independent vectors $d_1$ and $d_2$. We can generate a third output equation using the cross product, i.e., $y_1^\times y_2=R_t^{-1}d_1^\times d_2$, and hence we have $D=[d_1,d_2,d_1^\times d_2]$, whose column rank is full. In the next sections, we analyze the observability gap between our extrinsic embedding-based observer and the intrinsic implementation using invariant filters. This observability gap does not appear in every application, and we provide conditions for such a gap to vanish.

\subsection{Embedding-based IMU-Landmark Pose Observer}
\label{subsec::imu_lmk_observer}

Let $R_t\in\SO(3)$, $p\in\mathbb{R}^3$ and $v\in\mathbb{R}^3$ be the attitude, position and velocity, respectively. The IMU dynamics are
\begin{equation}
    \dot{R}_t=R_t\omega^\times,\ \dot{p}_t=v_t,\ \dot{v}_t=R_ta_t+g,
\end{equation}
where $\omega_t$ and $a_t$ are the gyroscope and accelerometer inputs. $g$ is the gravity. Several landmarks are observed as
\begin{equation}
    y_t^{(i)}=R_t^{-1}(d_i-p_t),\ 1\le i\le N.
\end{equation}
$d_i$ are known positions of the landmarks.

To construct an LTV embedding, we begin by letting $z_0^{(i)}=R_t^{-1}(d_i-p_t)$. Taking the derivative, we obtain 
\begin{equation}
    \dot{z}_0^{(i)}=-(R_t^{-1}\dot{R}_t)z_0^{(i)}-R_t^{-1}\dot{p}=-\omega_t^\times z_0^{(i)}-R_t^{-1}v_t.
\end{equation}
Since $R_t^{-1}v_t$ does not appear in previously embedded states, let $z_1=R_t^{-1}v_t$ and take derivative along system trajectory as 
\begin{equation}
    \begin{aligned}
    \dot{z}_1&=-(R_t^{-1}\dot{R}_t) z_1+R_t^{-1}(R_ta_t+g)\\
    &=-\omega_t^\times z_1+R_t^{-1}g+a_t
    \end{aligned}.
\end{equation}
Similarly, let $z_2=-R_t^{-1}g$, we have
\begin{equation}
    \dot{z}_2=-(R_t^{-1}\dot{R}_t)R_t^{-1}g=-\omega_t^\times z_2.
\end{equation}
Interestingly, no new term emerges that does not belong to the previous $z$. We obtain finite termination and construct an embedding as
\begin{equation}
    \label{eq::imu_lmk_ltv}
    \left\{\begin{aligned}
        \dot{z}_0^{(i)}&=-\omega_t^\times z_0^{(i)}-z_1\\
        \dot{z}_1&=-\omega_t^\times z_1-z_2+a_t\\
        \dot{z}_2&=-\omega_t^\times z_2\\
        y_t^{(i)}&=z_0^{(i)}
    \end{aligned}\right.,\ 1\le i\le N.
\end{equation}
Finite termination is not a coincidence, but the natural consequence of the linear observed structure, which we will reveal later. Again, the LTV system \eqref{eq::imu_lmk_ltv} is uniformly observable, and thus a Kalman observer for it \eqref{eq::imu_lmk_ltv} will converge uniformly exponentially. Leveraging the estimates of $z_i$, it remains to reconstruct the $\text{SE}_2(3)$-valued state $(\hat{R},\hat{p},\hat{v})$ through the following optimization
\begin{equation}
    \label{eq::imu_lmk_opt}
    \mathop{\min}_{R\in\SO(3),\ p\in\mathbb{R}^3,\ v\in\mathbb{R}^3}\left\Vert\hat{Z}-\begin{bmatrix}
        R & p & v\\
        0 & 1 & 0\\
        0 & 0 & 1
    \end{bmatrix}^{-1}D\right\Vert^2,
\end{equation}
where the coefficient matrices take the form
\begin{align}
    \hat{Z}&=\begin{bmatrix}
        \hat{z}_0^{(1)} & \hat{z}_0^{(2)} & \cdots & \hat{z}_0^{(N)} & \hat{z}_1 & \hat{z}_2\\
        1 & 1 & \cdots & 1 & 0 & 0\\
        0 & 0 & \cdots & 0 & -1 & 0
    \end{bmatrix}\in\mathbb{R}^{5\times(N+2)},\\
    D&=\begin{bmatrix}
        d_1 & d_2 & \cdots & d_N & 0_{3\times 1} & -g\\
        1 & 1 & \cdots & 1 & 0 & 0\\
        0 & 0 & \cdots & 0 & -1 & 0
    \end{bmatrix}\in\mathbb{R}^{5\times(N+2)}.
\end{align}
The systematic construction of $\hat{Z}$ and $D$ will be detailed in later sections. Moreover, the cost can be weighted for scaling reasons in practice. In this case, the weight is absorbed into $\hat{Z}$ and $D$. To proceed, we define the row block decomposition $\hat{Z}=[\hat{\bar{Z}}^\top,\hat{\underline{Z}}]^\top$ and $D=[\bar{D}^\top,\underline{D}^\top]^\top$, where $\hat{\bar{Z}},\bar{D}\in\mathbb{R}^{3\times(N+2)}$ and $\hat{\underline{Z}},\underline{D}\in\mathbb{R}^{2\times(N+2)}$, respectively. Let the SVD decomposition be
\begin{equation}
    \hat{\bar{Z}}\left[I_{5\times 5}-\underline{D}^\top\left(\underline{D}\underline{D}^\top\right)^{-1}\underline{D}\right]\bar{D}^\top=U\Lambda V^\top,
\end{equation}
where $U,V\in\text{O}(3)$ and $\Lambda\in\mathbb{R}^{3\times 3}$ is diagonal. $S=\text{diag}(1,1,\det(UV))$. Finally, we have the state estimation
\begin{align}
    \hat{R}&=VSU^\top,\\
    \label{eq::imu_lmk_pos_recon}
    [\hat{p},\hat{v}]&=(\bar{D}-VSU^\top\hat{\bar{Z}})\underline{D}^\top(\underline{D}\underline{D}^\top)^{-1}.
\end{align}

The success of the reconstruction, i.e., the uniqueness of the solution of \eqref{eq::imu_lmk_opt}, requires $\text{rank}(D)$ to be full. If there are at least three linearly independent landmarks, $\text{rank}(D)=5$. At the same time, an invariant filter for this problem also requires at least three non-colinear landmarks to make the system observable in logarithmic coordinates of $\text{SE}_2(3)$, hence the extrinsic and intrinsic observability conditions coincide.

\section{LTV Embedding of a Class of Linear Observed Systems on Groups}
\label{sec::properties_allowing_ltv_embedding}

Let $G\subset\mathbb{R}^{d_y\times d_y}$ be a matrix Lie group. Let $u\in\mathbb{R}^{\dim\mathfrak{g}}$ be time-dependent and $y^{(i)}\in\mathbb{R}^{d_y},\ 1\le i\le M$ be the input and outputs, respectively. A linear observed system on $G$ is given by
\begin{equation}
    \label{eq::los_system}
    \dot{\chi}=f_u(\chi),\quad y^{(i)}=h_i(\chi),\ i=1,2,...,M,
\end{equation}
where $f_u$ is a smooth group-affine vector field on $G$, and $h_i$ is a left- or right-linear action. The group-affine property is a structural condition \cite{InEKF} satisfying
\begin{equation}
    \label{eq::property_group_affine}
    f_u(\chi_1\chi_2)=\chi_1f_u(\chi_2)+f_u(\chi_1)\chi_2-\chi_1f_u(id_G)\chi_2,
\end{equation}
for all $\chi_1,\chi_2\in G$ and all $u\in\mathbb{R}^{\dim\mathfrak{g}}$. $h_i$ is defined by the natural linear action of $G$ on a known vector $d^{(i)}\in\mathbb{R}^{d_y}$ as
\begin{equation}
    h_i(\chi)=\chi^{-1}d^{(i)}\ \text{or}\ h_i(\chi)=\chi d^{(i)}.
\end{equation}

\subsection{A Class of Linear Observed Systems}

Using \eqref{eq::property_group_affine}, we have the decomposition $f(\chi)=g(\chi)+\chi f_u(id)$ or $f(\chi)=g(\chi)+f_u(id)\chi$ \cite{InEKF}. The vector field $g$ satisfies $g(\chi_1\chi_2)=g(\chi_1)\chi_2+\chi_1 g(\chi_2),\forall\chi_1,\chi_2\in G$. It is clear that $g$ is an infinitesimal generator of a one-parameter family of $\Aut(G)$, denoted by $\aut(G)$.

A special class of $g$ is related to the inner automorphisms $\Inn(G)\subset\Aut(G)$. Let $A\in\mathfrak{g}$ be identified with an $\mathbb{R}^{d_y\times d_y}$ matrix. This class of $g$ has the explicit expression $g_A(\chi)=A\chi-\chi A$. It is well known that the dynamics are bi-invariant when $g\in\inn(G)$. This compels us to consider group-affine vector fields $f_u$ corresponding to automorphisms beyond $\Inn(G)$, which are highly non-trivial. The quotient $\Out(G):=\Aut(G)/\Inn(G)$ is a well-defined abstract group. In general, it's impossible to obtain explicit formulas for $\Out(G)$. Luckily, such obstruction can be overcome through group embedding. We suppose that there exists another matrix Lie group $\tilde G\subset\GL(d_y,\mathbb{R})$ such that $G\subset\tilde G$ is a nontrivial normal subgroup of $\tilde G$. Let the Lie algebra of $\tilde G$ be $\tilde{\mathfrak{g}}$, which contains $\mathfrak{g}$. Consider the restriction of a vector field in $\inn(\tilde G)$ to $G$ as
\begin{equation*}
    g_{\tilde{A}}(\chi)=\tilde{A}\chi-\chi\tilde{A},\ \tilde{A}\in\tilde{\mathfrak{g}}.
\end{equation*}
If $\tilde A\in\tilde{\mathfrak{g}}$ but $\tilde A\not\in\mathfrak{g}$, then $g_{\tilde A}(\chi)$ is in $\aut(G)$ but not in $\inn(G)$. This means that $g_{\tilde{A}}$ corresponds to nontrivial outer automorphisms of $G$.

With the above notations, we now refine \eqref{eq::los_system} and define the class of systems to be studied throughout the paper:
\begin{align}
    \label{eq::def_type_1_sys}
    &\text{Type-1:}\ \ \left\{\begin{aligned}
        \dot{\chi}&=g_{\tilde{A}}(\chi)+\chi f_u(id)\\
        y^{(i)}&=\chi^{-1}d^{(i)},\ i=1,2,...,M
    \end{aligned}\right.,\\
    \label{eq::def_type_2_sys}
    &\text{Type-2:}\ \ \left\{\begin{aligned}
        \dot{\chi}&=g_{\tilde{A}}(\chi)+f_u(id)\chi\\
        y^{(i)}&=\chi d^{(i)},\ i=1,2,...,M
    \end{aligned}\right..
\end{align}
We emphasize that $\tilde{A}$ and all $d^{(i)}$ are time-independent. The input affects the dynamics only through a left- or right-invariant vector field. This is a true generalization of linear time-invariant systems to matrix Lie groups, as the drift term is time-independent and is related to an automorphism of the state space. Moreover, the control inputs are invariant, generalizing the constant vector fields in linear systems.

\begin{remark}
    In invariant filtering, defining a right-invariant error $e=\chi\hat{\chi}^{-1}$ for a Type-1 system or a left-invariant error $e=\hat{\chi}^{-1}\chi$ for a Type-2 system yields autonomous error dynamics $\dot{e}=g_{\tilde{A}}(e)$. Since $\tilde{A}$ and $d^{(i)}$ are constant, the Jacobians in logarithmic coordinates are independent of both the estimated state and time. 
\end{remark}

\subsection{LTV Embedding of the Class of System}

Unlike invariant \cite{InEKF,AnnuRevInEKF} or equivariant \cite{EqF, EqFCDC, Annurev_EqF} filters, which linearize the system in exponential (normal) coordinates and thus preclude global convergence, we avoid linearization through a powerful construction that allows a global embedding of the original system on the Lie group into an auxiliary high-dimensional linear system by delving deeper into the fine structure of the automorphism group. It is worth emphasizing that, unlike system immersion in \cite{nonlinear_immersion2}, which requires rank observability, our embedding mechanism is fundamentally different and is a consequence of the linear observed structure.

System \eqref{eq::los_system} embeds into an LTV system if there exists a smooth map $\pi:G\rightarrow\mathbb{R}^{d_z}$ such that for any common input and any pair of initial values aligned by $\pi$, i.e., $z_0=\pi(\chi_0)$, the outputs of the two systems coincide for all subsequent times \cite{nonlinear_immersion1, nonlinear_immersion2}. In contrast to LTV immersion, which is a local object valid in some open neighborhood, system embedding is a global object valid on the entire state space.

The existence of an embedded system is based on a finite termination criterion similar to that in Section~\ref{subsec::embedding_lti}.

\begin{theorem}\label{theorem::embedding_ltv}
    A Type-1 or Type-2 system defined by \eqref{eq::def_type_1_sys} or \eqref{eq::def_type_2_sys} can be embedded into an LTV system. The embedding map $\pi$ and the LTV system are given below. For simplicity, we consider only one measurement and omit the superscript $i$ in $y^{(i)}$ and $d^{(i)}$.
    \begin{itemize}
        \item For a Type-1 system, the embedding map is $\pi(\chi)=z:=[z_0^\top,\dots,z_{d_y-1}^\top]^\top\in\mathbb{R}^{d_y^2}$, whose row blocks are
        \begin{equation}
            \label{eq::pi_type_1}
            z_j:=\chi^{-1}(\tilde{A}^jd)\in\mathbb{R}^{d_y},\ 0\le j\le d_y-1,
        \end{equation}
        where $\tilde{A}^j$ denotes the $j$-th power of $\tilde{A}$. There exist constants $a_l\in\mathbb{R},\ l=0,1,...,d_y-1$, such that the dynamics governing $z$ are
        \begin{equation}
            \label{eq::ltv_type_1}
            \left\{\begin{aligned}
            &\dot z_j=-S_uz_j-z_{j+1},\ j=0,1,...,d_y-2\\
            &\dot z_{d_y-1}=-S_uz_{d_y-1}-\sum_{l=0}^{d_y-1}a_lz_l 
            \end{aligned}\right.,
        \end{equation}
        where $S_u:=f_u(id)-\tilde{A}$. The measurement is $y=z_0$.
        \item For a Type-2 system, the embedding map is $\pi(\chi)=z:=[z_0^\top,\dots,z_{d_y-1}^\top]^\top\in\mathbb{R}^{d_y^2}$, whose row blocks are
        \begin{equation}
            \label{eq::pi_type_2}
            z_j:=\chi(\tilde{A}^jd)\in\mathbb{R}^{d_y},\ 0\le j\le d_y-1.
        \end{equation}
        The same constants $a_l$ as in the Type-1 case are used. The dynamics governing $z$ are the same as \eqref{eq::ltv_type_1}, except that $S_u:=-f_u(id)-\tilde{A}$. The measurement is $y=z_0$.
    \end{itemize}
\end{theorem}
\begin{proof}
    See Appendix~\ref{app::proof_embedding_ltv}.
\end{proof}

\begin{remark}
    To guarantee observability, we must consider a system with multiple measurements, e.g., $y^{(i)}=\chi^{-1}d^{(i)}\in\mathbb{R}^{d_y},i=1,\dots,M$. The embedding is then performed $M$ times for each measurement, creating $M$ copies of embedded states $z^{(i)}:=[z_0^{(i)\top},\dots,z_{d_y-1}^{(i)\top}]^\top\in\mathbb{R}^{d_y^2}$. Stacking these states $z^{(i)}$ together yields an embedded LTV system on $\mathbb{R}^{Md_y^2}$. For a Type-1 system, the embedded system is given by
    \begin{equation}
        \label{eq::embedded_LTV}
        \left\{\begin{aligned}
        &\dot{z}^{(i)}_j=-S_u z^{(i)}_j-z^{(i)}_{j+1},\ 0\le j\le d_y-2\\
        &\dot z^{(i)}_{d_y-1}=-S_uz^{(i)}_{d_y-1}-\sum_{l=0}^{d_y-1}\tilde{a}_lz^{(i)}_l
        \end{aligned}\right.,
    \end{equation}
    with $M$ multiple measurements in $\mathbb{R}^{d_y}$ as $y^{(i)}=z^{(i)}_0,\ i=1,...,M$. For a Type-2 system, the LTV is the same as above with a different $S_u$.
\end{remark}

\begin{remark}
    In practice, one often considers joint estimation of an unknown constant input bias $b\in\mathbb{R}^{\dim\mathfrak{g}}$. Incorporating $b$ into the extended state in $G\times\mathfrak{g}$ destroys the group-affine property and leads to imperfect IEKFs \cite{TFG}. Despite efforts to modify the filter errors \cite{TFG, EqFBias,af_thesis}, local stability has not been proved with unknown input bias. Nevertheless, our embedding approach remains applicable with input bias $b$ by adding an additional equation $\dot{b}=0$ and replacing $f_u(id)$ with $f_{u+b}(id)$ in the embedded LTV system, shedding light on this issue. The equations governing the extended state $(z,b)$ which evolves in the vector space $\mathbb{R}^{Md_y^2}\times\mathbb{R}^{\dim\mathfrak{g}}$, become nonlinear.
\end{remark}

\section{Global Observer Design Using LTV embedding}
\label{sec::global_observer_using_ltv_embedding}

We focus on embeddable linear observed systems on $G$ given by \eqref{eq::def_type_1_sys} or \eqref{eq::def_type_2_sys} with $M$ measurements.
A global observer is constructed by first designing an observer for the embedded LTV and then reconstructing the state in the Lie group using the estimate of the LTV state. Note that our system model incorporates multiple measurements to ensure observability, similar to the formulation in \cite{InEKF}. 

\subsection{Observer Structure from the Embedding}

An observer for \eqref{eq::def_type_1_sys} or \eqref{eq::def_type_2_sys} is designed in two steps:
\begin{enumerate}
    \renewcommand{\labelenumi}{(\arabic{enumi})}
    \item implement a Kalman filter for the embedded LTV;
    \item use the estimate to reconstruct the group-valued state via optimization.
\end{enumerate}

Implementation of a Kalman filter for \eqref{eq::embedded_LTV} following Section~\ref{subsec::ekf} is straightforward. Suppose now we have obtained the estimate $\hat{z}$ from the first step, directly pulling back the linear observer to a dynamical system on $G$ is problematic, and thus the reconstruction of the state $\hat\chi\in G$ is formulated as solving an optimization problem \cite{obs_design_nls}. The reconstruction map $\hat{\chi}=\mathcal{T}(\hat{z})$ is defined as a left inverse of the embedding $\pi$ by
\begin{equation}
    \mathcal{T}:\mathbb{R}^{d_z}\rightarrow G,\ \hat{z}\mapsto\mathop{\arg\min}_{\hat{\chi}\in G}\Vert \hat{z}-\pi(\hat{\chi})\Vert^2.
\end{equation} 
For Type-1 systems, we have $z_j^{(i)}=\chi^{-1}(\tilde{A}^jd^{(i)})$. Let $d^{(i)}_j:=\tilde A^jd^{(i)}\in\mathbb{R}^{d_y}$. Hence, the reconstruction process is explicitly reformulated as
\begin{align}
    \label{eq::general_matrix_optimization_right}
    \hat{\chi}=\mathcal{T}(\hat{z})=\mathop{\arg\min}_{\hat{\chi}\in G}\left\Vert\hat{Z}-\hat{\chi}^{-1}D\right\Vert_{\Sigma}^2,
\end{align}
with the matrices $\hat{Z}, D\in\mathbb{R}^{d_y\times Md_y}$ given by $\hat{Z}:=\left[\hat{z}^{(1)}_0,...,\hat{z}^{(1)}_{d_y-1},...,\hat{z}^{(i)}_j,...,\hat{z}^{(M)}_0,...,\hat{z}^{(M)}_{d_y-1}\right]$ and $D:=\left[d^{(1)}_0,...,d^{(1)}_{d_y-1},...,d^{(i)}_j,...,d^{(M)}_0,...,d^{(M)}_{d_y-1}\right]$. Each column of $\hat{Z}$ is an estimate of the $j$-th component of the state of the embedded LTV corresponding to the $i$-th output, and each column of $D$ is the system structure. $d_0^{(i)}$ is directly obtained from the measurement equation, while $d_j^{(i)}$ results from the action of the powers of $\tilde{A}$ on $d_0^{(i)}$. Note that the index $i$ ranges from $[1,M]$ and $j$ ranges from $[0,d_y-1]$. The optimization problem \eqref{eq::general_matrix_optimization_right} can be weighted by a positive definite matrix $\Sigma\in\mathbb{S}^{Md_y}_+$.

For Type-2 systems, we have $z_j^{(i)}=\chi(\tilde{A}^jd^{(i)})$. $\hat{Z}$ and $D$ are defined as before. The optimization formulation is slightly different from \eqref{eq::general_matrix_optimization_right}:
\begin{equation}
    \label{eq::general_matrix_optimization_left}
    \hat{\chi}=\mathcal{T}(\hat{z})=\mathop{\arg\min}_{\hat{\chi}\in G}\left\Vert\hat{Z}-\hat{\chi}D\right\Vert_{\Sigma}^2.
\end{equation}

Since the constant weighting matrix $\Sigma$ can be absorbed into the norm by redefining $\hat{Z}$ and $D$ as $\hat{Z}\Sigma^{-\frac{1}{2}}$ and $D\Sigma^{-\frac{1}{2}}$, respectively, we omit $\Sigma$ for notational simplicity.

\begin{remark}
    Solving the optimization problem defined in \eqref{eq::general_matrix_optimization_right} or \eqref{eq::general_matrix_optimization_left} is nontrivial. The cost function is generally nonconvex and has multiple critical points due to the non-Euclidean topology of the state space $G$ \cite{Morse_theory}. Fortunately, for two-frame systems, the global minimum can be found explicitly through generalizing the Umeyama techniques in \cite{umeyama_algorithm}. If the global minimum is unique, any algorithm capable of escaping a saddle point or local maximum will converge to this global minimum. Hence, we do not specify a particular optimization algorithm for reconstructing $\hat{\chi}$. The optimization is expected to be solved at each time step when the Kalman observer provides an estimate. For two-frame systems, the computational cost is equivalent to that of an SVD. In resource-constrained environments, a gradient-like observer may be implemented to perform a single descent step at each time step. However, the algorithm must be carefully designed to avoid saddle points, which will inevitably appear.
\end{remark}

\subsection{Joint Estimation of Lie-Algebra-Valued Input Biases}\label{subsec::obs_with_bias}

If the input of the system on $G$ is corrupted by a constant $\mathfrak{g}$-valued bias, the embedding into $\mathbb{R}^{d_z}$ is preserved, and such a bias can be simultaneously estimated through an observer for the embedded system with an augmented bias state. Let $d_b:=\dim\mathfrak{g}$. To address bias estimation, we characterize the dependence of $f_u(id)$ on the input bias $b\in\mathbb{R}^{d_b}$.

\begin{assumption}\label{assumption::additivity_bias}
    $f_{u-b}(id)$ is affine in the input bias $b\in\mathbb{R}^{d_b}$, i.e., $f_{u-b}(id)=f_u(id)-\mathcal{L}_{\mathfrak{g}}(b)$.
\end{assumption}

Thus, $S_{u-b}=S_u-\mathcal{L}_{\mathfrak{g}}(b)$ for Type-1 systems and $S_{u-b}=S_u+\mathcal{L}_{\mathfrak{g}}(b)$ for Type-2 systems. The embedded system with bias is given by 
\begin{equation}
\label{eq::embedded_system_bias}
    \left\{\begin{aligned}
    &\dot{z}^{(i)}_j=-(S_u\mp\mathcal{L}_{\mathfrak{g}}(b)) z^{(i)}_j-z^{(i)}_{j+1}\\
    &\dot z^{(i)}_{d_y-1}=-(S_u\mp\mathcal{L}_{\mathfrak{g}}(b))z^{(i)}_{d_y-1}-\sum_{l=0}^{d_y-1}\tilde{a}_lz^{(i)}_l\\
    &\dot{b}=0,\quad y^{(i)}=z_0^{(i)},\ 1\le i\le M,\ 0\le j\le d_y-2
    \end{aligned}\right.,
\end{equation}
The bias state introduces multiplicative nonlinearities to the embedded system from the coupling of the bias and the original states $z^{(i)}_j$. An error-state Kalman-like observer for \eqref{eq::embedded_system_bias} is implemented following Section~\ref{subsec::ekf}. Using the estimates of $\hat{z}_j^{(i)}$, the group-valued state is reconstructed as before.

\subsection{Global Properties of the Bias-free Observer}

The proposed observer, consisting of a Kalman observer followed by an optimization, exhibits global stability if the optimization algorithm attains a global minimum under certain regularity conditions and the Kalman observer is globally exponentially stable. 

\begin{assumption}\label{assumption::lower_boundedness_transition}
    There exist constants $\alpha_1,\alpha_2>0$ such that $\alpha_1 I\preceq\psi^\top(t)\psi(t)\preceq\alpha_2 I,\forall t$, where $\dot{\psi}(t)=-S_u\psi(t)$ and $S_u$ is given in Theorem~\ref{theorem::embedding_ltv} for Type-1 and Type-2 systems.
\end{assumption}

\begin{lemma}\label{lm::FH_observability}
    Under Assumption~\ref{assumption::lower_boundedness_transition}, the embedded LTV system is uniformly observable.
\end{lemma}

\begin{proof}
    See Appendix~\ref{app::proof_FH_observability}.
\end{proof}

\begin{remark}
    Many practical systems, e.g., two-frame systems \cite{TFG}, satisfy Assumption~\ref{assumption::lower_boundedness_transition} due to state group structure as shown later. The assumption is considered very weak.
\end{remark}

\begin{assumption}\label{assumption::structure_rank_condition}
    The matrix $D\in\mathbb{R}^{d_y\times Md_y}$ from the structure of the system satisfies $\text{rank}(D)=d_y$. Note that $d_y$ is the dimension of the square matrix into which $G$ is embedded.
\end{assumption}

\begin{remark}
    Since $G$ consists of invertible matrices, $Z=\chi^{-1}D$ or $Z=\chi D$ has a unique solution for $\chi$ determined by $Z$ and $D$ if $\text{rank}(D)=d_y$. This implies that the cost function achieves a unique global minimum of zero. Hence, the true $\chi$ can be uniquely determined from the true embedded system state. We refer to this rank condition as an extrinsic observability requirement for embedding-based observers.
\end{remark}

Let $\hat{\chi}$ and $\chi$ denote the estimated and true states on $G$, respectively. Since $G$ is identified with a closed subgroup of $\text{GL}(d_y,\mathbb{R})$, an extrinsic metric on $G$ is defined by $d(\hat{\chi},\chi)=\Vert \chi^{-1}-\hat{\chi}^{-1}\Vert$ or $d(\hat{\chi},\chi)=\Vert\chi-\hat{\chi}\Vert$. The former metric is used in the stability analysis for Type-1 systems, while the latter is used for Type-2 systems.

\begin{theorem}\label{theorem::stability_non_bias}
    Under Assumption~\ref{assumption::lower_boundedness_transition} and \ref{assumption::structure_rank_condition}, the proposed embedding-based observer for \eqref{eq::def_type_1_sys} or \eqref{eq::def_type_2_sys} is globally exponentially stable with respect to the corresponding metric defined above.
\end{theorem}

\begin{proof}
    See Appendix~\ref{app::proof_stability_non_bias}.
\end{proof}

\begin{remark}
    For the two-frame systems discussed later in Section~\ref{sec::two_frame_systems}, an explicit solution of a global minimum for \eqref{eq::general_matrix_optimization_right} or \eqref{eq::general_matrix_optimization_left} can be obtained, thereby enabling global observer realizations as per Theorem~\ref{theorem::stability_non_bias}.
\end{remark}

\subsection{Comparison with Invariant Filtering}

It is interesting to compare the observability assumption for stability in our embedding-based observer and in invariant filtering. We consider only Type-1 systems without loss of generality. Let $\mathcal{O}_1=D=[\tilde{A}^jd^{(i)}]\in\mathbb{R}^{d_y\times Md_y}$ with $1\le j\le d_y$ and $1\le i\le M$ be the extrinsic observability matrix required for the reconstruction from the embedded state to the original state and let $\mathcal{O}_2$ be the intrinsic observability matrix encountered in invariant filtering. Using the Lie algebra isomorphism $\mathcal{L}$, we define the right-invariant error $\chi\hat{\chi}^{-1}=\exp(\mathcal{L}(x))$, where $x\in\mathbb{R}^{\dim\mathfrak{g}}$. The error-state dynamics in logarithmic coordinates are $\dot{x}=\mathcal{L}^{-1}(\tilde{A}\mathcal{L}(x)-\mathcal{L}(x)\tilde{A})=[\mathcal{L}^{-1}(\tilde{A}),x]=ad_{\mathcal{L}^{-1}(\tilde{A})}x$, where $[\cdot,\cdot]$ denotes the Lie bracket on $\tilde{\mathfrak{g}}$ and $ad_{y}x:=[y,x]$ for $x,y\in\tilde{\mathfrak{g}}$. Since the innovation $\Delta^{(i)}$ in InEKF takes the form $\Delta^{(i)}=\hat{\chi}y^{(i)}-d^{(i)}=\exp^{-1}(\mathcal{L}(x))d^{(i)}-d^{(i)}$ \cite{InEKF}, the Jacobian related to the $i$-th measurement is $H^{(i)}=\mathcal{L}^\dagger(d^{(i)})$, where the operator $\mathcal{L}^\dagger\in\mathbb{R}^{d_y\times\dim\mathfrak{g}}$ satisfies $\mathcal{L}^{\dagger}(d^{(i)})x=-\mathcal{L}(x)d^{(i)}$. It is clear that the Jacobians are independent of time, hence the Kalman observability matrix in the InEKF is given by
\begin{equation}
    \mathcal{O}_2=\begin{bmatrix}
        \mathcal{O}_2^{(1)}\\
        \mathcal{O}_2^{(2)}\\
        \vdots\\
        \mathcal{O}_2^{(M)}
    \end{bmatrix},\ \mathcal{O}_2^{(i)}=\begin{bmatrix}
        \mathcal{L}^{\dagger}(d^{(i)})\\
        \mathcal{L}^{\dagger}(d^{(i)})ad_{\mathcal{L}^{-1}(\tilde{A})}\\
        \vdots\\
        \mathcal{L}^{\dagger}(d^{(i)})ad^{\dim\mathfrak{g}-1}_{\mathcal{L}^{-1}(\tilde{A})}\\
    \end{bmatrix}.
\end{equation} 

\begin{theorem}\label{th::obs_comparison}
    If ${O}_1$ has full rank, then $\mathcal{O}_2$ has full rank. Conversely, if for every subspace $S\neq\mathbb{R}^{d_y}$ of $\mathbb{R}^{d_y}$ invariant under $\tilde{A}$, i.e., $\tilde{A}S\subset S$, there exists a nonzero $M\in\mathfrak{g}$ such that $MS=0$, then $\mathcal{O}_2$ having full rank implies that $\mathcal{O}_1$ has full rank. 
\end{theorem}
\begin{proof}
    See Appendix~\ref{app::proof_obs_comparison}.
\end{proof}

The observability assumption required by the embedding-based design is sometimes stricter than that of the InEKF, a price paid for achieving global stability. In Section~\ref{subsec::sim_att_obs}, we remedy this issue by generating a third linear-independent landmark. In Section~\ref{subsec::imu_lmk_observer}, the observability requirements for both methods are the same.

\subsection{Semi-global Properties of the Observer with Bias}

With biases, the embedding approach remains applicable, but the resulting system is no longer linear, and the Jacobians of the extended Kalman filter depend on the estimated state $\hat{z}$. The stability of the extended Kalman filter applied to this system relies on the uniform boundedness of the eigenvalues of $P$, which depends on the uniform observability, and in turn, on the estimated state $\hat{z}$ through its Jacobians. The most challenging part is establishing the conditions for the uniform observability of the embedded system, after which the method in \cite{semi_global_ekf,ekf_modified_riccati_ode} provides a guarantee on the boundedness of $P$, leading to the following nonlocal results, as in \cite{semi_global_ekf}.

\begin{assumption}\label{assumption::bounded_true_trajectory}
    The true trajectory on the group evolves within a compact subset $\mathcal{G}_1$ of $G$.
\end{assumption}

\begin{lemma}\label{lm::uniform_observability}
    If the following conditions hold
    \begin{enumerate}
        \renewcommand{\labelenumi}{(\arabic{enumi})}
        \item there exist $\alpha_1,\alpha_2>0$ such that $\alpha_1I\preceq\psi^\top(t)\psi(t)\preceq\alpha_2I$ for all $t$ with $\dot{\psi}(t)=-S_u\psi(t)$;
        \item there exists $\mu_2>0$ such that $\mathcal{L}^\top(\hat{b}(t))\mathcal{L}(\hat{b}(t))\preceq\mu_2I$ for all $t$;
        \item there exist $\delta,\gamma_1,\beta>0$ such that for all $t$, $\Vert\Gamma(t)\Vert\le\beta$, and the excitation satisfies
        \begin{equation*}
            \frac{1}{\delta}\int_t^{t+\delta}\Gamma^\top(\tau)\Gamma(\tau)d\tau\succeq\gamma_1I,
        \end{equation*}
        where $\Gamma:=[\mathcal{L}^\ddagger(\hat{z}_0^{(i)})^\top, ..., \mathcal{L}^\ddagger(\hat{z}_j^{(i)})^\top, ..., \mathcal{L}^{\ddagger}(\hat{z}_{N-1}^{(i)})^\top]^\top$ for some $i$ with $1\le i\le M$, evaluated at the embedded estimate and $\mathcal{L}^\ddagger$ is the unique operator satisfying $\mathcal{L}^\ddagger(z)x=\mathcal{L}(x)z,\forall x\in\mathbb{R}^{\dim\mathfrak{g}},\forall z\in\mathbb{R}^{d_y}$;
        \item the excitation is sufficiently strong over a sufficiently small interval:
        \begin{equation*}
            \gamma_1e^{-8\delta c_\eta}>\frac{3\alpha_2^3\lambda^2_{\max}(R)\beta^2}{4\alpha_1^3\lambda_{\min}^2(R)},
        \end{equation*}
        where $c_\eta:=\frac{\alpha_2}{\alpha_1}\mu_2+\sqrt{\sum_{l=0}^{N-1}\tilde{a}_l^2+1}$. Note that $a_l$ are the coefficients from the system structure \eqref{eq::embedded_system_bias};
    \end{enumerate}
    then the Jacobian pair, $(\tilde{F}_u,\tilde{H})$ obtained by linearizing \eqref{eq::embedded_system_bias}, is uniformly observable and determinable.
\end{lemma}
\begin{proof}
    See Appendix~\ref{app:proof_uniform_observability}.
\end{proof}

\begin{theorem}\label{theorem::stability_bias}
    For the case with bias, consider the observer discussed in Section~\ref{subsec::obs_with_bias}. Under Assumptions~\ref{assumption::structure_rank_condition}, \ref{assumption::bounded_true_trajectory} and the conditions of Lemma~\ref{lm::uniform_observability}, i.e., sufficient excitation, there exist a compact subset $\mathcal{G}_2\subset G$ and a compact subset $\hat{\mathcal{B}}\subset\mathbb{R}^{\dim\mathfrak{g}}$ such that for any initial condition $\hat{\chi}(t_0)\in\text{int}(\mathcal{G}_1)$ and any $\hat{b}(t_0)\in\mathcal{B}\subset\hat{\mathcal{B}}$, the estimates $\hat{\chi}(t),\hat{b}(t)$ remain in $\mathcal{G}_2$ and $\hat{\mathcal{B}}$, respectively, for all $t\in[t_0,\infty)$. Moreover, $d(\hat{\chi}(t),\chi(t))$ and $\Vert\hat{b}(t)-b\Vert$ converge exponentially to zero after some finite time.
\end{theorem}

\begin{proof}
    See Appendix~\ref{app::proof_stability_bias}.
\end{proof}

Interestingly, the stability guarantee with bias involves assumptions on the strength of the  excitation rather than additional observability beyond $\text{rank}(D)$, such as additional landmarks $d^{(i)}$.

\section{Application to Two-Frame Systems}\label{sec::two_frame_systems}

We apply the proposed observer design toolbox to two-frame systems, which are linear observed systems on two-frame groups, constructed via the semidirect product of a rotation group with several vector spaces. Two-frame systems provide a powerful framework for modeling a broad class of navigation problems. Our theory provides unified GES observer designs, compared with the local results achieved by the InEKF \cite{TFG, InEKF} and case-by-case nonlinear constructive methods \cite{hybrid_observer_landmark} achieving global results. Without loss of generality, we consider only Type-1 embeddable systems.

\subsection{Embeddable Two-Frame Systems}

The two-frame group, denoted by $\TFG(d,n,m)$ \cite{TFG}, where $d=2$ or $3$ and $n,m\in\mathbb{N}$, is a matrix Lie group defined as the closed subgroup of $\GL(d+n+m,\mathbb{R})$ in the form of
\begin{equation*}
    \TFG(d,n,m)=\left\{\left.\begin{bmatrix}
        R & W\\
        0 & I\\
    \end{bmatrix}\right|\begin{aligned}
        &R\in\SO(d),W=[X\ RY]\\
        &X\in\mathbb{R}^{d\times n},Y\in\mathbb{R}^{d\times m}
    \end{aligned}    
    \right\}.
\end{equation*} 

The size of $W$ is $\mathbb{R}^{d\times(n+m)}$. Each column of $X$ and $Y$ is an $\mathbb{R}^d$-valued vector. The two-frame group describes the rigid geometric transformation between two frames, serving as the state space in single rigid-body kinematics. In navigation problems, suppose that $R$ represents the rotation from the body frame to the world frame, the $n$ columns of $X$ are related to $\mathbb{R}^d$-valued states expressed in the world frame, whereas the $m$ columns of $Y$ expressed in the body frame. In the embedding-based observer design, $W=[X,RY]$ is treated as a whole, and hence there is no fundamental difference between designing observers for TFG and $\text{SE}_{n+m}(d)$. We still use $\TFG(d,n,m)$, as it models a broader class of systems. Its Lie algebra $\tfg(d,n,m)\subset\mathbb{R}^{(d+n+m)\times(d+n+m)}$ is given by
\begin{equation*}
    \tfg(d,n,m)=\left\{\left.\begin{bmatrix}
        \omega^\times & \rho\\
        0 & 0
    \end{bmatrix}\right|\omega\in\mathbb{R}^{\frac{d(d-1)}{2}},\rho\in\mathbb{R}^{d\times(n+m)}   
    \right\}.
\end{equation*}
The extended similarity transformation group  $\SIM_{n+m}(d)$ \cite{synchronous_observer_design}, where $d=2$ or $3$ and $n,m\in\mathbb{N}$, is a matrix Lie group, defined as the closed subgroup of $\GL(d+n+m,\mathbb{R})$ as
\begin{equation*}
    \SIM_{n+m}(d)=\left\{\left.\begin{bmatrix}
        R & W\\
        0 & A\\
    \end{bmatrix}\right|\begin{aligned}
        &R\in\SO(d),W\in\mathbb{R}^{d\times(n+m)}\\
        &A\in\GL(n+m,\mathbb{R})
    \end{aligned}\right\}.
\end{equation*}

Its Lie algebra $\mathfrak{sim}_{n+m}(d)$ is given by
\begin{equation*}
    \mathfrak{sim}_{n+m}(d)=\left\{\left.\begin{bmatrix}
        \Omega^\times & \gamma\\
        0 & L
    \end{bmatrix}\right|\begin{aligned}    
        &\Omega\in\mathbb{R}^{\frac{d(d-1)}{2}},\gamma\in\mathbb{R}^{d\times(n+m)}\\
        & L\in\mathbb{R}^{(n+m)\times (n+m)}
    \end{aligned}   
    \right\}.
\end{equation*}

$\TFG(d,n,m)$ (or $\text{SE}_{n+m}(d)$) is a normal subgroup of $\SIM_{n+m}(d)$ \cite{synchronous_observer_design}. Using $\tfg$, we provide an explicit characterization of Type-1 systems following \eqref{eq::def_type_1_sys}. Let the state to be estimated $T\in\TFG(d,n,m)$ with block components as 
\begin{align}
    \label{eq::tfg_state}
    T=\begin{bmatrix}
        R & W\\
        0 & I
    \end{bmatrix}\in\TFG(d,n,m).
\end{align}

Letting $\Omega$, $\gamma$, $\tilde{\omega}$, and $\tilde{\rho}$ be matrix blocks of appropriate dimensions, we write
\begin{equation}
    \label{eq::tfg_case1_dynamics_complicated}
    \dot{T}_t=\begin{bmatrix}
        \Omega^\times & \gamma\\
        0 & L
    \end{bmatrix}T_t-T_t\begin{bmatrix}
        \Omega^\times & \gamma\\
        0 & L
    \end{bmatrix}+T_t\begin{bmatrix}
        \tilde{\omega}_t^\times & \tilde{\rho}_t\\
        0 & 0
    \end{bmatrix},
\end{equation}
following \eqref{eq::def_type_1_sys}, where we explicitly indicate the time dependence. For simplicity, we combine terms and define $\omega_t:=\tilde{\omega}_t-\Omega\in\mathbb{R}^{\frac{d(d-1)}{2}}$ and $\rho_t:=\tilde{\rho}_t-\gamma\in\mathbb{R}^{d\times(n+m)}$ as inputs. This yields the characterization of Type-1 embeddable two-frame systems as 
\begin{equation}
    \label{eq::tfg_case1_systems}
    \left\{\begin{aligned}\dot{T}_t&=\begin{bmatrix}
        \Omega^\times & \gamma\\
        0 & L
    \end{bmatrix}T_t+T_t\begin{bmatrix}
        \omega_t^\times & \rho_t\\
        0 & -L
    \end{bmatrix}\\
    y^{(i)}&=T_t^{-1}d^{(i)},\ i=1,2,...,M
    \end{aligned}\right.,
\end{equation}
where $d^{(i)}\in\mathbb{R}^{d+n+m},\ i=1,...,M$, are constant vectors. Note that $y^{(i)}\in\mathbb{R}^{d+n+m}$. Let the $\tfg$-matrix be
\begin{equation}
    \label{eq::tfg_aut}
    \tilde{A}=\begin{bmatrix}
        \Omega^\times & \gamma\\
        0 & L
    \end{bmatrix}\in\mathfrak{sim}_{n+m}(d).
\end{equation}

Let $N=d+n+m$. Let $\pi:\TFG(d,n,m)\rightarrow\mathbb{R}^{d_z}$ denote the embedding map for \eqref{eq::tfg_case1_systems}, i.e., $\pi(T)=z:=[z_0^{(1)\top},\dots,z_{N-1}^{(M)\top}]^\top\in\mathbb{R}^{d_z}$ with row blocks defined by $z_j^{(i)}=T^{-1}\tilde{A}^jd^{(i)},0\le j\le N-1,1\le i\le M$. Thus, $d_z=MN^2$, with $z_{j}^{(i)}\in\mathbb{R}^{N}$. The embedding of \eqref{eq::tfg_case1_systems} is
\begin{equation}
    \label{eq::immersion_tfg_right}
    \left\{\begin{aligned}
    \dot{z}^{(i)}_j&=-\begin{bmatrix}
        \omega_t^\times & \rho_t\\
        0 & -L
    \end{bmatrix}z^{(i)}_j-z^{(i)}_{j+1},\ j\in[0,N-2]\\
    \dot{z}^{(i)}_{N-1}&=-\begin{bmatrix}
        \omega_t^\times & \rho_t\\
        0 & -L
    \end{bmatrix}z^{(i)}_{N-1}-\sum_{l=0}^{N-1}\tilde{a}_lz^{(i)}_l\\
    y^{(i)}&=z_0^{(i)},\ i\in[1,M]
    \end{aligned}\right.,
\end{equation}
where the coefficients $\tilde{a}_j$ are obtained from the operator equation $\tilde{A}^{N}=\sum_{l=0}^{N-1}\tilde{a}_l\tilde{A}^l$. The variables involved in \eqref{eq::immersion_tfg_right} are homogeneous coordinates, and thus only the first $d$ coordinates of $z_j^{(i)}$ or $y^{(i)}$ are of interest in practice. Decompose the state and measurement as $z_j^{(i)}=[\bar{z}_j^{(i)\top},\underline{z}_j^{(i)\top}]^\top$ and $y^{(i)}=[\bar{y}^{(i)\top},\underline{y}^{(i)\top}]^\top$, where $\bar{z}_j^{(i)},\bar{y}^{(i)}\in\mathbb{R}^d$ and $\underline{z}_j^{(i)},\underline{y}^{(i)}\in\mathbb{R}^{n+m}$. \eqref{eq::immersion_tfg_right} is divided into two subsystems
\begin{align}
    \label{eq::immersion_tfg_right_bar}
    &\left\{\begin{aligned}
        \dot{\bar{z}}^{(i)}_j&=-\omega_t^\times\bar{z}_j-\rho_t\underline{z}_j^{(i)}-\bar{z}_{j+1}^{(i)},\ j\in[0,N-2]\\
        \dot{\bar{z}}_{N-1}^{(i)}&=-\omega_t^\times\bar{z}_{N-1}^{(i)}-\rho_t\underline{z}_{N-1}^{(i)}-\sum_{l=0}^{N-1}\tilde a_l\bar{z}_l^{(i)}\\
        \bar{y}^{(i)}&=\bar{z}_0^{(i)},\ i\in[1,M]
    \end{aligned}\right.,\\
    \label{eq::immersion_tfg_right_underline}
    &\left\{\begin{aligned}
        \dot{\underline{z}}_j^{(i)}&=L\underline{z}_j^{(i)}-\underline{z}_{j+1}^{(i)},\ j\in[0,N-2]\\
        \dot{\underline{z}}_{N-1}^{(i)}&=L\underline{z}_{N-1}^{(i)}-\sum_{l=0}^{N-1}\tilde{a}_l\underline{z}_l^{(i)}\\
        \underline{y}^{(i)}&=\underline{z}_0^{(i)},\ i\in[1,M]
    \end{aligned}\right.,
\end{align}
where \eqref{eq::immersion_tfg_right_bar} is cascaded with \eqref{eq::immersion_tfg_right_underline}. 

As before, we define the notation $d_j^{(i)}=\tilde{A}^jd^{(i)}$. Defining $d_j^{(i)}=[\bar{d}_j^{(i)\top},\underline{d}_j^{(i)\top}]^\top$, where $\bar{d}_j^{(i)}\in\mathbb{R}^d$ and $\underline{d}_j^{(i)}\in\mathbb{R}^{n+m}$, these components are calculated inductively by
\begin{align}
    \label{eq::dij_underline}
    \underline{d}^{(i)}_j&=L^j\underline{d}^{(i)},\ i\in[1,M],\ j\in[0,N-1],\\
    \label{eq::dij_bar}
    \bar{d}^{(i)}_{j+1}&=\Omega^\times\bar{d}^{(i)}_j+\gamma\underline{d}_j^{(i)},\ i\in[1,M],\ j\in[0,N-2].
\end{align}

Since all underlined variables result from extending the physical coordinates to homogeneous coordinates, it is known a priori that $\underline{y}^{(i)}=\underline{z}_0^{(i)}=\underline{d}^{(i)}_0=\underline{d}^{(i)}$ are constants. Let the initial values of \eqref{eq::immersion_tfg_right_underline} be $\underline{z}_j^{(i)}(t_0)=\underline{d}^{(i)}_j=L^j\underline{d}^{(i)}$. The subsystem \eqref{eq::immersion_tfg_right_underline} remains constant as the right-hand side of \eqref{eq::immersion_tfg_right_underline} vanishes for all $t\ge t_0$ by virtue of $L^{N}=\sum_{l=0}^{N-1}\tilde{a}_lL^l$, which follows from the definition of $\tilde{a}_l$. Hence, it suffices to design an observer for the subsystem \eqref{eq::immersion_tfg_right_bar}. Since $\underline{z}^{(i)}_j\equiv\underline{d}_j^{(i)}$, substituting $\underline{z}_j^{(i)}$ with $\underline{d}_j^{(i)}$, the embedded system \eqref{eq::immersion_tfg_right} becomes
\begin{align}
    \label{eq::immersion_tfg_right_bar_final}
    \left\{\begin{aligned}
        \dot{\bar{z}}^{(i)}_j&=-\omega_t^\times\bar{z}^{(i)}_j-\rho_t\underline{d}_j^{(i)}-\bar{z}_{j+1}^{(i)},\ j\in[0,N-2]\\
        \dot{\bar{z}}_{N-1}^{(i)}&=-\omega_t^\times\bar{z}_{N-1}^{(i)}-\rho_t\underline{d}_{N-1}^{(i)}-\sum_{l=0}^{N-1}\tilde a_l\bar{z}_l^{(i)}\\
        \bar{y}^{(i)}&=\bar{z}_0^{(i)},\ i\in[1,M]
    \end{aligned}\right..
\end{align}

A Kalman observer is designed for \eqref{eq::immersion_tfg_right_bar_final}. Let the observer state be $\hatbar{z}:=[\hatbar{z}_0^{(1)\top},...,\hatbar{z}_j^{(i)\top},...,\hatbar{z}_{N-1}^{(M)\top}]^\top\in\mathbb{R}^{MNd}$ with $0\le j\le N-1$ and $1\le i\le M$, where $\hatbar{z}_j^{(i)}$ is in $\mathbb{R}^d$. Let $\bar{y}=[\bar{y}^{(1)\top},...,\bar{y}^{(M)\top}]^\top\in\mathbb{R}^{Md}$ denote the stacked output. The observer dynamics are given by
\begin{align}
    \label{eq::tfg_immersion_observer}
    \dot{\hatbar{z}}=F_u\hatbar{z}+C_u+K(\bar{y}-H\hatbar{z}),
\end{align}
where the Kalman gain $K\in\mathbb{R}^{MNd\times Md}$ is calculated using the Riccati ODE. $H=\text{diag}(H^{(1)},...,H^{(M)})$ with blocks $H^{(i)}=[I_{d\times d}, 0_{d\times d(N-1)}], 1\le i\le M$. The $(F_u,C_u)$ pair is composed of blocks $F_u=\text{diag}(F_u^{(1)},\dots,F_u^{(M)})$ and $C_u=[C_u^{(1)\top},\dots,C_u^{(M)\top}]^\top$, where $F_u^{(i)}\in\mathbb{R}^{Nd\times Nd}$ and $C_u^{(i)}\in\mathbb{R}^{Nd}$ correspond to the $i$-th measurement. These blocks are given by \eqref{eq::immersion_tfg_right_bar_final} as
\begin{align*}
    F_u^{(i)}&=\begin{bmatrix}
        -\omega_t^\times & -I & 0 & \cdots & 0\\
        0 & -\omega_t^\times & -I & \cdots & 0\\
        \vdots & \vdots & \vdots & \vdots & \vdots\\
        -\tilde{a}_0I & -\tilde{a}_1I & -\tilde{a}_2I & \cdots & -\omega_t^\times-\tilde{a}_{N-1}I
    \end{bmatrix},\\
    C_u^{(i)}&=\begin{bmatrix}
        -\left(\rho_t\underline{d}_0^{(i)}\right)^\top & \cdots &
        -\left(\rho_t\underline{d}_{N-1}^{(i)}\right)^\top
    \end{bmatrix}^\top.
\end{align*}

\subsection{Two-frame Group State Reconstruction}

We now formulate the state reconstruction procedure \eqref{eq::general_matrix_optimization_right} or \eqref{eq::general_matrix_optimization_left} for two-frame systems. Let $\hat{Z}=[\hatbar{Z}^\top,\underline{Z}^\top]^\top$ and $D=[\bar{D}^\top,\underline{D}^\top]^\top$. Their components are obtained from the estimates of the embedded LTV system, as
\begin{align}
    \label{eq::bar_Z}
    \hatbar{Z}&:=\left[
        \hatbar{z}^{(1)}_0,\dots,\hatbar{z}^{(1)}_{N-1},\dots,\hatbar{z}^{(i)}_j,\dots,\hatbar{z}^{(M)}_0,\dots,\hatbar{z}^{(M)}_{N-1}
    \right],\\
    \label{eq::bar_D}
    \bar{D}&:=\left[
        \bar{d}^{(1)}_0,\dots,\bar{d}^{(1)}_{N-1},\dots,\bar{d}^{(i)}_j,\dots,\bar{d}^{(M)}_0,\dots,\bar{d}^{(M)}_{N-1}
    \right],\\
    \label{eq::underline_D}
    \underline{D}&=\left[
        \underline{d}^{(1)}_0,\dots,\underline{d}^{(1)}_{N-1},\dots,\underline{d}^{(i)}_j,\dots,\underline{d}^{(M)}_0,\dots,\underline{d}^{(M)}_{N-1}
    \right].
\end{align}
Moreover, $\underline{Z}=\underline{D}$. The dimensions of these matrix blocks are easily determined. Based on the Umeyama algorithm \cite{umeyama_algorithm}, we have the following lemma, which later solves \eqref{eq::general_matrix_optimization_right}.

\begin{lemma}\label{lm::umeyama}
    Assume that $\underline{D}\underline{D}^\top$ is invertible. Let a singular value decomposition be
    \begin{equation*}
        \bar{U}\Lambda\bar{V}^\top=\hatbar{Z}\left[I_{MN\times MN}-\underline{D}^\top(\underline{D}\underline{D}^\top)^{-1}\underline{D}\right]\bar{D}^\top 
    \end{equation*}
    with singular values $\Lambda:=\text{diag}(\sigma_1,...,\sigma_d)$ in decreasing order. Define $\bar{S}$ as
    \begin{align}
        \bar{S}=\begin{cases}
            I_{d\times d}, & \det(\bar{U}\bar{V})=1\\
            \text{diag}(I_{(d-1)\times(d-1)},-1), & \det(\bar{U}\bar{V})=-1
        \end{cases}.
    \end{align}
    
    Then, the global minimum of optimization
    \begin{align}
        \label{eq::umeyama_optimization}
        \mathop{\min}_{R\in\SO(d),W\in\mathbb{R}^{d\times(n+m)}}\left\Vert\begin{bmatrix}
            \hatbar{Z} \\ \underline{Z}
        \end{bmatrix}-\begin{bmatrix}
            R & W\\
            0 & I
        \end{bmatrix}^{-1}\begin{bmatrix}
            \bar{D} \\ \underline{D}
        \end{bmatrix}\right\Vert^2
    \end{align}
    is given by
    \begin{align}
        R^*&=\bar{V}\bar{S}\bar{U}^\top,\\
        W^*&=(\bar{D}-\bar{V}\bar{S}\bar{U}^\top\hatbar{Z})\underline{D}^\top(\underline{D}\underline{D}^\top)^{-1}.
    \end{align}
\end{lemma}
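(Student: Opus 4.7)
The plan is to reduce the optimization in \eqref{eq::umeyama_optimization} to a constrained trace-maximization over $\SO(d)$ that is solvable by the classical Umeyama/Kabsch procedure. The two-frame block structure is the key: inverting
\begin{equation*}
\begin{bmatrix} R & W \\ 0 & I \end{bmatrix}^{-1} = \begin{bmatrix} R^\top & -R^\top W \\ 0 & I \end{bmatrix},
\end{equation*}
the lower block of the residual equals $\underline{Z}-\underline{D}$, which vanishes identically by hypothesis. Therefore the cost reduces to $\|\hatbar{Z}-R^\top(\bar D-W\underline D)\|^2$, and by Frobenius invariance under left multiplication by the orthogonal matrix $R$ this is the same as $\|R\hatbar{Z}+W\underline D-\bar D\|^2$.

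Next I would profile out $W$. For fixed $R$, the cost is a linear least-squares problem in $W$. Differentiating and using invertibility of $\underline D\underline D^\top$ gives the closed form
\begin{equation*}
W^*(R)=(\bar D-R\hatbar{Z})\underline D^\top(\underline D\underline D^\top)^{-1},
\end{equation*}
which already matches the lemma's expression for $W^*$ once $R=R^*$. Substituting $W^*(R)$ back, the residual becomes $(R\hatbar{Z}-\bar D)P$ where $P:=I_{MN\times MN}-\underline D^\top(\underline D\underline D^\top)^{-1}\underline D$ is the orthogonal projector onto the kernel of $\underline D$. Since $P$ is symmetric and idempotent, expanding $\|(R\hatbar{Z}-\bar D)P\|^2$ and using $\mathrm{tr}(R\hatbar{Z}P\hatbar{Z}^\top R^\top)=\mathrm{tr}(\hatbar{Z}P\hatbar{Z}^\top)$ (constant in $R$ because $R^\top R=I$) shows the problem is equivalent to
\begin{equation*}
\max_{R\in\SO(d)}\ \mathrm{tr}(RM),\qquad M:=\hatbar{Z}P\bar D^\top,
\end{equation*}
which is precisely the matrix whose SVD $\bar U\Lambda\bar V^\top$ is prescribed in the statement.

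The last step is the Umeyama/Kabsch argument on $\max_{R\in\SO(d)}\mathrm{tr}(RM)$. Writing $Q:=\bar V^\top R\bar U$, the objective is $\mathrm{tr}(Q\Lambda)=\sum_i\sigma_iQ_{ii}$, $Q$ is orthogonal with $\det Q=\det(\bar U\bar V)$, and each $Q_{ii}\in[-1,1]$. If $\det(\bar U\bar V)=+1$ the maximum equals $\sum_i\sigma_i$ attained at $Q=I$, hence $R^*=\bar V\bar U^\top$; if $\det(\bar U\bar V)=-1$ the maximum over orthogonal $Q$ with $\det Q=-1$ is attained, because $\sigma_d$ is the smallest singular value, at $Q=\bar S=\mathrm{diag}(I_{(d-1)\times(d-1)},-1)$, giving $R^*=\bar V\bar S\bar U^\top$. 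Both cases are unified by the $\bar S$ defined in the lemma. Plugging this $R^*$ into $W^*(R)$ yields the stated $W^*$.

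The routine parts are the matrix algebra and the closed-form of $W^*$; the only subtle step is the sign correction in the Umeyama reduction when $\det(\bar U\bar V)=-1$, which requires the observation that the loss from flipping the sign on the smallest singular value is minimal. This is standard and follows the original Umeyama argument \cite{umeyama_algorithm} once the problem has been cast in the $\max\mathrm{tr}(RM)$ form, so no new technical obstacle arises beyond carefully tracking the reduction above.
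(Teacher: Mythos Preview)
Your proposal is correct and follows essentially the same route as the paper's proof: reduce the cost to the upper block, eliminate $W$ in closed form (the paper does this by completing the square, you do it by least squares---these are the same computation), and then apply the Umeyama/Kabsch argument to the resulting $\max_{R\in\SO(d)}\tr(RM)$ with $M=\hatbar{Z}P\bar{D}^\top$. The only cosmetic difference is that the paper introduces a Cholesky factor $\bar{L}$ of the projector $P$ to rewrite the reduced problem as $\|\hatbar{Z}\bar{L}-R^{-1}\bar{D}\bar{L}\|^2$, whereas you use $P=P^\top=P^2$ directly; both lead to the same trace maximization and the same $R^*,W^*$.
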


\begin{proof}
    See Appendix~\ref{app::proof_lm_umeyama}.
\end{proof}

We solve the TFG state reconstruction $\hat{\chi}=\mathcal{T}(\hat{z})$ formulated in \eqref{eq::general_matrix_optimization_right} using Lemma~\ref{lm::umeyama}. The constant weighting matrix $\Sigma$ is absorbed into the cost function by substituting $\hat{Z}$ and $D$ with $\hat{Z}\Sigma^{-\frac{1}{2}}$ and $D\Sigma^{-\frac{1}{2}}$. Define the state estimate $\hat{\chi}\in\TFG(d,n,m)$ as
\begin{align}
    \hat{\chi}=\begin{bmatrix}
        \hat{R} & \hat{W}\\
        0 & I
    \end{bmatrix}\in\TFG(d,n,m).
\end{align}
The state reconstruction for Type-1 systems using \eqref{eq::general_matrix_optimization_right} is
\begin{align}
    \label{eq::tfg_state_reconstruct_right}
    \left\{\begin{aligned}
        \hat{R}&=\bar{V}\bar{S}\bar{U}^\top\\
        [\hat{X}, \hat{R}\hat{Y}]&=\hat{W}=(\bar{D}-\bar{V}\bar{S}\bar{U}^\top\hatbar{Z})\underline{D}^\top(\underline{D}\underline{D}^\top)^{-1}
    \end{aligned}\right..
\end{align}

\subsection{Global Properties of the Bias-free Observer for Embeddable Two-frame Systems}

By Theorem~\ref{theorem::stability_non_bias}, the global stability of the proposed observer for two-frame systems is determined by the rank condition.

\begin{proposition}\label{proposition::tfg_stability_non_bias}
    The embedding-based observer for a Type-1 embeddable two-frame system is globally exponentially stable if $\text{rank}(D)=d+n+m$.
\end{proposition}

\begin{proof}
    It suffices to show that Assumption~\ref{assumption::lower_boundedness_transition} holds. For embeddable two-frame systems, $\dot{\psi}=-\omega_t^\times\psi$, which implies that $\psi(t)\in\SO(d)$ for all $t$, thereby completing the proof. 
\end{proof}

\subsection{Joint Estimation of Input Biases}

We consider joint estimation of a constant $\tfg(d,n,m)$-valued bias $b$, which involves only slight modifications to the embedded system on $\mathbb{R}^{d_z}$ and its corresponding estimator. The components of the bias are
\begin{align}
    \mathcal{L}_{\TFG}(b)=\begin{bmatrix}
        b_\omega^\times & b_\rho\\
        0 & 0
    \end{bmatrix}\in\tfg(d,n,m),
\end{align}
where $b_\omega\in\mathbb{R}^{\frac{d(d-1)}{2}}$ and $b_\rho\in\mathbb{R}^{d\times(n+m)}$. By Assumption~\ref{assumption::additivity_bias}, the embedded system involves the bias in an additive fashion, that is, we replace $(\omega_t,\rho_t)$ with $(\omega_t+b_\omega,\rho_t+b_\rho)$. For a Type-1 two-frame system, the biased embedded system is given by
\begin{align}
    \label{eq::immersion_bias_tfg_right_bar_final}
    \left\{\begin{aligned}
        \dot{\bar{z}}^{(i)}_j&=-\left(\omega_t+b_\omega\right)^\times\bar{z}^{(i)}_j-\left(\rho_t+b_\rho\right)\underline{d}_j^{(i)}-\bar{z}_{j+1}^{(i)}\\
        \dot{\bar{z}}_{N-1}^{(i)}&=-\left(\omega_t+b_\omega\right)^\times\bar{z}^{(i)}_{N-1}-\left(\rho_t+b_\rho\right)\underline{d}_{N-1}^{(i)}-\sum_{l=0}^{N-1}\tilde a_l\bar{z}_l^{(i)}\\
        \dot{b}_\omega&=0,\ \dot{b}_\rho=0,\quad j\in[0,N-2]\\
        \bar{y}^{(i)}&=\bar{z}_0^{(i)},\ i\in[1,M]
    \end{aligned}\right..
\end{align}

We vectorize $b_\rho$ by stacking its columns into a single vector.

\begin{proposition}\label{prop::vector_bias}
    If the embeddable two-frame system is subject only to $b_\rho$ associated with the vector dynamics, an embedding-based observer for the biased system \eqref{eq::immersion_bias_tfg_right_bar_final} is GES if $\text{rank}(D)=d+n+m$.
\end{proposition}
\begin{proof}
    See Appendix~\ref{app::proof_vector_bias}.
\end{proof}

\begin{proposition}
    If the embeddable two-frame system is subject to both $b_\rho$ and $b_\omega$, the embedding-based observer with bias is semi-globally stable, if (1) the conditions of Lemma~\ref{lm::uniform_observability} are satisfied; (2) the true trajectory is bounded; (3) $\text{rank}(D)=d+n+m$.
\end{proposition}

\begin{proof}
    This follows directly from Theorem~\ref{theorem::stability_bias}.
\end{proof}

The simultaneous estimation of angular bias makes the problem nontrivial, as indicated by Lemma~\ref{lm::uniform_observability}. The excitation of the angular part of the input must be sufficient large to satisfy the conditions of Lemma~\ref{lm::uniform_observability}, that is, $\Vert\omega_t\Vert$ or $\Vert\dot{\omega}_t\Vert$ should not be identically zero. In practice, one implements the observer and verifies that the smallest eigenvalue of $\mathcal{O}(t,t+\delta)$ is uniformly bounded below.

\subsection{Extension to Range and Bearing Measurements}

LTV embedding makes it possible to handle bearing and range measurements which are not compatible with the InEKF. Using the invariant error in these measurements introduces additional dependence of the innovation on the estimate, thereby destroying the local stability guarantee in the InEKF. 

Define the projection $\pi_{\mathbb{S}^n}:\mathbb{R}^{n+1}\rightarrow\mathbb{S}^n,x\mapsto x/\Vert x\Vert_2$. Bearing measurements for \eqref{eq::tfg_case1_systems} are given by
\begin{equation}
    y^{(i)}=\pi_{\mathbb{S}^{d-1}}\left(T_t^{-1}d^{(i)}\right).
\end{equation}
Thus, the original measurement equation $\bar{y}^{(i)}=\bar{z}_0^{(i)}$ in \eqref{eq::immersion_tfg_right_bar_final} is replaced with
\begin{equation}
    \label{eq::bearing_embedding}
    y^{(i)}=\pi_{\mathbb{S}^{d-1}}(\bar{z_0}^{(i)}),
\end{equation}
while the embedding remains valid. Although \eqref{eq::bearing_embedding} is nonlinear, it can be converted into a time-varying linear form using well-known orthogonal projection techniques widely used in \cite{hybrid_observer_vision_aided,riccati_observer_pnp, pose_velocity_landmark_position_wang, bearing_ins, bearing_ins_acc_bias,relative_pose_bearing} as
\begin{equation}
    \label{eq::bearing_linear_measurement}
    0=\left(I_{d\times d}-\bar{y}^{(i)}\bar{y}^{(i)\top}\right)\bar{z}_0^{(i)}:=\Pi_{\bar{y}^{(i)}}\bar{z}_0^{(i)},
\end{equation}
where the known trajectory of $\bar{y}^{(i)}$ (with $\Vert\bar{y}^{(i)}\Vert=1$) is injected into the linear measurement matrix, and $0$ is regarded as a virtual measurement.

For embeddable two-frame systems with bearing measurements, our observer is implemented in the same manner as in Proposition~\ref{proposition::tfg_stability_non_bias} with the difference that the measurement equations in the embedded system are replaced with \eqref{eq::bearing_linear_measurement}. In addition to the rank condition on $D$, global exponential stability is achieved if the Kalman observer for the embedded system with bearing measurements is uniformly observable. This does not hold automatically, as in Proposition~\ref{proposition::tfg_stability_non_bias}. Similar observability analyses involving bearings have been comprehensively conducted in \cite{hybrid_observer_vision_aided,riccati_observer_pnp}. Such analyses are beyond the scope of the present paper and thus are omitted.

Another interesting type of measurement is range, given by
\begin{equation}
    \label{eq::range_measurement}
    y^{(i)}=\left\Vert T_t^{-1}d^{(i)}\right\Vert_2
\end{equation}
for \eqref{eq::tfg_case1_systems}. Hence, the original measurement equation $\bar{y}^{(i)}=\bar{z}_0^{(i)}$ in \eqref{eq::immersion_tfg_right_bar_final} is replaced with
\begin{equation}
    \label{eq::range_immersion}
    \bar{y}^{(i)}=\left\Vert\bar{z_0}^{(i)}\right\Vert_2.
\end{equation}

\begin{proposition}\label{proposition::range_immersion}
    The embedded system \eqref{eq::immersion_tfg_right_bar_final} can be further embedded into an LTV system with state $(\bar{z},s)$ given by
    \begin{equation}
        \label{eq::ltv_range_immersion}
        \dot{\bar{z}}=f_1(\bar{z},u),\ \dot{s}=f_2(\bar{z},s,u),\ \frac{1}{2}\left(y^{(i)}\right)^2=s_{0,0}^{(i)}
    \end{equation} 
    where $f_1$ is linear in $\bar{z}$, as in \eqref{eq::immersion_tfg_right_bar_final}. $f_2$ is linear in $\bar{z}$ and $s$, and $u$ is the input. The extended state $s:=[...,s_{j,k}^{(i)},...]^\top$, where each component is defined by
    \begin{equation}
        s_{j,k}^{(i)}:=\frac{1}{2}\bar{z}_j^{(i)\top}\bar{z}_k^{(i)},
    \end{equation}
    with $0\le j\le k$, $0\le k\le N-1$ and $1\le i\le M$. Recall that $N=d+n+m$.
\end{proposition}

\begin{proof}
    See Appendix~\ref{app::proof_range_immersion} for the expressions of $f_2$.
\end{proof}

\begin{remark}
    A linear time-invariant system with quadratic outputs can be embedded into a higher-dimensional LTV system \cite{LTI_quadratic_outputs, LTI_range_navigation}. In general, if the system dynamics become time-varying, the embedding fails. Proposition~\ref{proposition::range_immersion} holds due to the skew-symmetric structure of $\omega_t^\times$. Hence, we provide a valuable example of a successful embedding of an LTV system with quadratic outputs.
\end{remark}

For non-biased embeddable two-frame systems with range measurements, our observer is implemented with the Kalman filter for the embedded system designed for \eqref{eq::ltv_range_immersion}. In addition to the rank condition on $D$, global exponential stability is achieved if the Kalman observer for \eqref{eq::ltv_range_immersion} is uniformly observable. This does not hold without further assumptions, as in Proposition~\ref{proposition::tfg_stability_non_bias}. Detailed observability analysis for \eqref{eq::ltv_range_immersion} is beyond the scope of the present paper.

\subsection{An Example of an Embeddable Two-Frame System}

Navigation on rotating Earth provides a nontrivial example demonstrating how to fit a system into an embeddable two-frame model. Let $R_t,p_t$ and $v_t$ denote the attitude, position and velocity, respectively. Let the Earth's angular velocity be $\Omega\in\mathbb{R}^{3}$. The unbiased IMU dynamics are given by $\dot{R}_t=-\Omega^\times R_t+R_t\omega_t^\times$, $\dot{p}_t=v_t$ and $\dot{v}_t=R_ta_t+g-2\Omega^\times v_t-(\Omega^\times)^2p_t$, where $\omega_t,a_t\in\mathbb{R}^3$ are the gyroscope and accelerometer inputs \cite{rotating_earth}. To fit this system into the two-frame framework, define $W_t:=[p_t, v_t+\Omega^\times p_t]$ and define the state $T_t=\begin{bmatrix}R_t & W_t\\ 0 & I\end{bmatrix}\in\TFG(3,2,0)$. The IMU dynamics satisfy \eqref{eq::tfg_case1_systems} as $\dot{T}_t=\begin{bmatrix}
    -\Omega^\times & \gamma\\
    0 & L
\end{bmatrix}T_t+T_t\begin{bmatrix}
    \omega_t^\times & \rho_t\\
    0 & -L
\end{bmatrix}$, where $\gamma:=[0, g]\in\mathbb{R}^{3\times 2}$, $\rho_t:=[0, a_t]\in\mathbb{R}^{3\times 2}$. Note that $L:=\begin{bmatrix}
    0 & 0 \\ -1 & 0
\end{bmatrix}$ and $g$ is the local gravity vector expressed in the world frame. Consider two landmark-type measurements $\bar{y}^{(i)}=R_t^{-1}(\bar{d}^{(i)}-p_t),i=1,2$, one bearing measurement $\bar{y}^{(3)}=\pi_{\mathbb{S}^2}(R_t^{-1}(\bar{d}^{(3)}-p_t))$, and one range measurement $\bar{y}^{(4)}=\Vert R_t^{-1}(\bar{d}^{(4)}-p_t)\Vert$. The $\bar{d}^{(i)},i=1,\dots,4$, are known vectors in $\mathbb{R}^3$. To express the measurement equations in the form $y^{(i)}=T_t^{-1}d^{(i)}$, we introduce homogeneous coordinates $y^{(i)}=\begin{bmatrix}
    \bar{y}^{(i)} \\ \underline{y}^{(i)}
\end{bmatrix}, d^{(i)}=\begin{bmatrix}
    \bar{d}^{(i)} \\ \underline{d}^{(i)}
\end{bmatrix}$ where $\underline{d}^{(i)}=\underline{y}^{(i)}=\begin{bmatrix}
    1 \\ 0
\end{bmatrix}$.

This example is a Type-1 embeddable system. Define $d_j^{(i)}$ and the embedded state $z_j^{(i)}$ as $z_j^{(i)}:=T_t^{-1}\begin{bmatrix}
    -\Omega^\times & \gamma\\
    0 & L
\end{bmatrix}^j d^{(i)}:=T_t^{-1}d_j^{(i)}$ with $i\in[1,4]$ and $j\in[0,4]$. Let $z_j^{(i)}=\begin{bmatrix}
    \bar{z}_j^{(i)} \\ \underline{z}_j^{(i)}
\end{bmatrix}$ and $d_j^{(i)}=\begin{bmatrix}
    \bar{d}_j^{(i)} \\ \underline{d}_j^{(i)}
\end{bmatrix}$. The underlined variables are $\underline{d}_0^{(i)}\equiv\underline{z}_0^{(i)}=\begin{bmatrix}
    1 \\ 0
\end{bmatrix}$, $\underline{d}_1^{(i)}\equiv\underline{z}_1^{(i)}=\begin{bmatrix}
    0 \\ -1
\end{bmatrix}$ and $\underline{d}_j^{(i)}\equiv\underline{z}_j^{(i)}=\begin{bmatrix}
    0 \\ 0
\end{bmatrix}(j=2,3,4)$. The barred variables are $\bar{d}^{(i)}_0=\bar{d}^{(i)}$, $\bar{d}^{(i)}_1=-\Omega^\times\bar{d}^{(i)}_0$, $\bar{d}^{(i)}_2=(\Omega^\times)^2\bar{d}^{(i)}_0-g$, $\bar{d}^{(i)}_3=\Vert\Omega\Vert^2\Omega^\times(\bar{d}^{(i)}_0+g)$, and $\bar{d}^{(i)}_4=-\Vert\Omega\Vert^2(\Omega^\times)^2(\bar{d}^{(i)}_0+g)$. By Theorem~\ref{theorem::embedding_ltv}, the embedded LTV system is given by $\dot{\bar{z}}^{(i)}_0=-\omega^\times\bar{z}^{(i)}_0-\bar{z}^{(i)}_1, \dot{\bar{z}}^{(i)}_1=-\omega^\times\bar{z}^{(i)}_1+a-\bar{z}^{(i)}_2, \dot{\bar{z}}^{(i)}_2=-\omega^\times\bar{z}^{(i)}_2-\bar{z}^{(i)}_3, \dot{\bar{z}}^{(i)}_3=-\omega^\times\bar{z}^{(i)}_3-\bar{z}^{(i)}_4, \dot{\bar{z}}^{(i)}_4=-\omega^\times\bar{z}^{(i)}_4+\Vert\Omega\Vert^2\bar{z}^{(i)}_3$ for $i\in[1,4]$. The landmark measurements are $\bar{y}^{(i)}=\bar{z}_0^{(i)},i=1,2$. The bearing measurement is $0=(I_{3\times 3}-\bar{y}^{(3)}\bar{y}^{(3)\top})\bar{z}_0^{(3)}$. For the range measurement, the embedded LTV is extended to include $\dot{s}_{0,0}=-2s_{0,1}, \dot{s}_{0,1}=-s_{1,1}-\frac{1}{2}a^\top\bar{z}^{(4)}_0-s_{0,2}, \dot{s}_{0,2}=-s_{1,2}-s_{0,3}, \dot{s}_{0,3}=-s_{1,3}-s_{0,4}, \dot{s}_{0,4}=-s_{1,4}+\Vert\Omega\Vert^2s_{0,3}, \dot{s}_{1,1}=-a^\top\bar{z}^{(4)}_1-2s_{1,2}, \dot{s}_{1,2}=-\frac{1}{2}a^\top\bar{z}^{(4)}_2-s_{2,2}-s_{1,3}, \dot{s}_{1,3}=-\frac{1}{2}a^\top\bar{z}^{(4)}_3-s_{2,3}-s_{1,4}, \dot{s}_{1,4}=-\frac{1}{2}a^\top\bar{z}^{(4)}_4-s_{2,4}+\Vert\Omega\Vert^2s_{1,3}, \dot{s}_{2,2}=-2s_{2,3}, \dot{s}_{2,3}=-s_{3,3}-s_{2,4}, \dot{s}_{2,4}=-s_{3,4}+\Vert\Omega\Vert^2s_{2,3}, \dot{s}_{3,3}=-2s_{3,4}, \dot{s}_{3,4}=-s_{4,4}+\Vert\Omega\Vert^2s_{3,3}, \dot{s}_{4,4}=2\Vert\Omega\Vert^2s_{3,4}$, where each $s_{\cdot,\cdot}$ is a scalar. The range measurement is $\frac{1}{2}(y^{(4)})^2=s_{0,0}$, which is linear. We use a Kalman observer to obtain estimates of the embedded state. Using these estimates, we employ \eqref{eq::tfg_state_reconstruct_right} to reconstruct the $\TFG(3,2,0)$ state. The matrices $\hatbar{Z},\bar{D}$, and $\underline{D}$ are assembled using previously defined constants following \eqref{eq::bar_Z}--\eqref{eq::underline_D}.

\section{Simulation on Examples}
\label{sec::simulation_examples}

We present two pedagogical examples based on the classical systems in Section~\ref{subsec::att_observer} and Section~\ref{subsec::imu_lmk_observer}. We emphasize that embeddable two-frame systems encompass much more than these examples.

\subsection{Embedding-based Attitude Observer}\label{subsec::sim_att_obs}

Returning to the model in Section~\ref{subsec::att_observer}, we consider only two landmarks and additionally estimate gyroscope bias. Including noise, the system equations are
\begin{equation}
    \label{eq::att_obs_with_noise}
    \left\{\begin{aligned}
    \dot{z}_0^{(i)}&=-(\omega-b_g+n_g)^\times z_0^{(i)}\\
    \dot{b}_g&=n_{b_g},\ \ y^{(i)}=z_0^{(i)}+n_y
    \end{aligned}\right.,
\end{equation}
where $b_g\in\mathbb{R}^3$ is the bias, and $i=1,2$. $n_g$, $n_{b_g}$, and $n_y$ are the gyroscope, bias, and measurement noise, respectively. Define the error-state components as $z_0^{(i)}=\hat{z}_0^{(i)}+\delta z_0^{(i)}$, $b_g=\hat{b}_g+\delta b_g$. Let the error-state vector be $[\delta z_0^{(1)\top},\delta z_0^{(2)\top},\delta b_g^\top]^\top$ and let the process noise be $[n_g^\top, n_{b_g}^\top]^\top$. The Kalman observer for \eqref{eq::att_obs_with_noise} is designed following Section~\ref{subsec::ekf} as
\begin{align}
    &F=\begin{bmatrix}
        -(\omega-\hat{b}_g)^\times & 0_3 & -\left(\hat{z}_0^{(1)}\right)^\times\\
        0_3 & -(\omega-\hat{b}_g)^\times & -\left(\hat{z}_0^{(2)}\right)^\times\\
        0_3 & 0_3 & 0_3
    \end{bmatrix},\\
    &\mathcal{G}=\begin{bmatrix}
        \left(\hat{z}_0^{(1)}\right) & 0_3\\
        \left(\hat{z}_0^{(2)}\right)^\times & 0_{3}\\
        0_3 & I_3
    \end{bmatrix},\ 
    H=\begin{bmatrix}
        I_3 & 0_3 & 0_3\\
        0_3 & I_3 & 0_3
    \end{bmatrix}.
\end{align}

Using the estimates of this Kalman observer, we obtain estimates of $b_g$, and the attitude is reconstructed using $\hat{Z}=\left[\hat{z}_0^{(1)},\hat{z}_0^{(2)},\left(\hat{z}_0^{(1)}\right)^\times \hat{z}_0^{(2)}\right]$ and $D=[d_1,d_2,d_1^\times d_2]$. In practice, the Mahalanobis norm is used in the optimization weighted by $\Sigma:=\text{diag}(\sigma_1,\sigma_2,\sigma_3)$, where $\sigma_i=\tr(P_i)$ and $P_i$ is the $3\times 3$-diagonal block of $P$ corresponding to each error state. 

Simulations are conducted to compare the proposed observer with the imperfect InEKF with bias. We use a right-invariant error for the attitude and an additive error for the bias, given by $R=\exp(\delta\theta^\times)\hat{R}$ and $b_g=\hat{b}_g+\delta b_g$. The InEKF is tuned using the same noise parameters $n_g$, $n_{b_g}$, and $n_y$, with corresponding Jacobians
\begin{equation}
    F=\begin{bmatrix}
        0_3 & -\hat{R}\\
        0_3 & 0_3
    \end{bmatrix},\ \mathcal{G}=\begin{bmatrix}
        \hat{R} & 0_3\\
        0_3 & I_3
    \end{bmatrix},\ 
    H=\begin{bmatrix}
        d_1^\times & 0_3\\
        d_2^\times & 0_3
    \end{bmatrix},
\end{equation}
where the error-state and process noise are $[\delta\theta^\top,\delta b_g^\top]^\top$ and $[n_g^\top,n_{b_g}^\top]^\top$. We generate the system trajectory using closed-form formulas. Let $\alpha(t)=\pi\sin(\frac{\pi}{40}t)$, $\beta(t)= 2\pi\cos(\frac{\pi}{30}t+\frac{\pi}{9})$, and $\gamma(t)=2\pi\sin(\frac{\pi}{25}t-\frac{\pi}{7})$. The rotation axis is $n(t)=[\cos\beta(t)\cos\gamma(t), \cos\beta(t)\sin\gamma(t), \sin\beta(t)]^\top$, and the ground-truth rotation is $R(t)=\exp(\alpha(t)n(t)^\times)$. Let $b_g=[0.02, -0.01, 0.01]^\top$. The gyroscope input $\omega$ is calculated from the true trajectory. The positions of the two landmarks are $d_1=[-5,10,3]^\top$ and $d_2=[6,0,-5]^\top$. For the noise, we set $Q=\text{diag}(\text{cov}(n_g)I_3,\text{cov}(n_{b_g}))=\text{diag}(10^{-2}I_3,10^{-4}I_3)$ and $\mathcal{R}=I_2\otimes\text{cov}(n_y)=1.0I_2\otimes I_3$. The IMU rate is $200\text{Hz}$, and updates are performed every three IMU samples. Both the proposed observer and the InEKF share identical inputs, outputs, and noise configurations. The two methods are initialized with the same large initial error, specifically, a rotation angle error of $0.99\pi$ about the axis $[0.59,0.43,0.68]^\top$, and zero initial bias. In this setting, $\Vert\omega\Vert$ and $\Vert\dot{\omega}\Vert$ are not identically zero, and the embedded system is numerically verified to be uniformly observable. The results are shown in Fig.~\ref{fig::sim_att_obs}, demonstrating the superior convergence performance of the proposed method.

\begin{figure}[t]
 \centering
 \parbox{3.3in}{
   \includegraphics[scale=0.65]{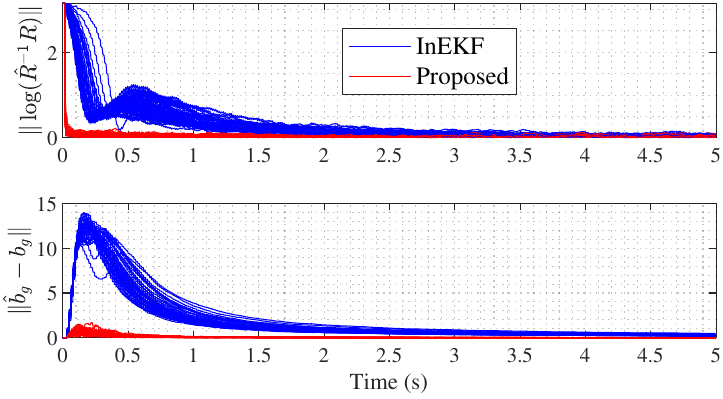}
 }
 \caption{Comparison of the proposed observer with the InEKF for attitude estimation. 50 runs are conducted from the same initial condition with large initial rotation error and identical noise configurations.} 
 \label{fig::sim_att_obs}
 \vspace{-0.5cm}
\end{figure}

\subsection{Embedding-based IMU-Landmark Pose Observer}

Continuing with the system in Section~\ref{subsec::imu_lmk_observer}, we consider three landmarks. Associating noise to \eqref{eq::imu_lmk_ltv}, we obtain the system equations
\begin{equation}
    \label{eq::imu_lmk_noise_ltv}
    \left\{\begin{aligned}
        \dot{z}_0^{(i)}&=-(\omega_t+n_g)^\times z_0^{(i)}-z_1\\
        \dot{z}_1&=-(\omega_t+n_g)^\times z_1-z_2+a_t+n_a\\
        \dot{z}_2&=-(\omega_t+n_g)^\times z_2,\ \ y_t^{(i)}=z_0^{(i)}+n_y
    \end{aligned}\right.,
\end{equation}
where $i=1,2,3$. $n_g$, $n_a$, and $n_y$ are $\mathbb{R}^3$-valued gyroscope, accelerometer, and landmark measurement noise, respectively. Define the error-state components $z_0^{(i)}=\hat{z}_0^{(i)}+\delta z_0^{(i)}$, $z_1=\hat{z}_1+\delta z_1$, and $z_2=\hat{z}_2+\delta z_2$. Let the error-state vector be $[z_0^{(1)\top},z_0^{(2)\top},z_0^{(3)\top},z_1^\top,z_2^\top]^\top$ and let the process noise be $[n_g^\top, n_a^\top]^\top$. Stacking the measurements, the Kalman observer for \eqref{eq::imu_lmk_noise_ltv}, designed following Section~\ref{subsec::ekf}, is given by
\begin{align}
    &F=-\omega_t^\times\otimes I_5+I_3\otimes J_5^{(1)},\ H=[I_9,\ 0_{9\times 15}],\\
    &\mathcal{G}=\begin{bmatrix}
        \left(\hat{z}_0^{(1)}\right)^\times & \left(\hat{z}_0^{(2)}\right)^\times & \left(\hat{z}_0^{(3)}\right)^\times & \hat{z}_1^\times & \hat{z}_2^\times\\
        0_3 & 0_3 & 0_3 & I_3 & 0_3 
    \end{bmatrix}^\top.
\end{align}
where $J_5^{(1)}\in\mathbb{R}^{5\times 5}$ has ones on the first upper off-diagonal and zeros everywhere else. State reconstruction is performed using \eqref{eq::imu_lmk_opt}--\eqref{eq::imu_lmk_pos_recon}. We conduct simulations to compare the proposed method to the InEKF. We use a right-invariant error on $\text{SE}_2(3)$ for the pose, given by $R=\exp(\delta\theta^\times)\hat{R}$, $p=\exp(\delta\theta)\hat{p}+\mathcal{J}_l(\delta\theta)\delta p$, and $v=\exp(\delta\theta)\hat{v}+\mathcal{J}_l(\delta\theta)\delta v$, where $\mathcal{J}_l$ is the left Jacobian of $\SO(3)$. The InEKF is tuned using the same noise parameters $n_g$ and $n_a$, with corresponding Jacobians
\begin{equation}
    F=\begin{bmatrix}
        0_3 & 0_3 & 0_3\\
        0_3 & 0_3 & I_3\\
        g^\times & 0_3 & 0_3
    \end{bmatrix},\mathcal{G}=\begin{bmatrix}
        \hat{R} & 0_3\\
        \hat{p}^\times\hat{R} & 0_3\\
        \hat{v}^\times\hat{R} & 0_3
    \end{bmatrix},H=\begin{bmatrix}
        H_1\\
        H_2\\
        H_3
    \end{bmatrix} ,
\end{equation}
where $H_i=\left[d_i^\times,\ -I_3,\ 0_3\right],\ i=1,2,3$. The error-state and process noise for the above Jacobians are $[\delta\theta^\top,\delta p^\top,\delta v^\top]^\top$ and $[n_g^\top,n_a^\top]^\top$. Closed-form formulas are used to generate the system trajectory. We use $\alpha(t)$, $\beta(t)$, and $\gamma(t)$ from Section~\ref{subsec::sim_att_obs} to generate the same ground-truth attitude. Let $p_x(t)=20\cos(\frac{\pi}{55}t)-5$, $p_y(t)= 40\sin(\frac{\pi}{65}t)$ and $p_z(t)=60\sin(\frac{\pi}{50}t)$. The ground-truth velocity and acceleration are calculated. The gravity vector is $g=[0,0,-9.81]^\top$. The positions of the three landmarks are $d_1=[-20,1,19]^\top$, $d_2=[-33,-30,5]^\top$, and $d_3=[24,60,-70]^\top$. For the noise, we set $Q=\text{diag}(\text{cov}(n_g),\text{cov}(n_a))=\text{diag}(0.1I_3,0.32I_3)$ and $\mathcal{R}=I_3\otimes\text{cov}(n_y)=1.0 I_3\otimes I_3$. As before, the IMU rate is $200\text{Hz}$ and updates are performed every 3 steps. The proposed observer and the InEKF share identical inputs, outputs and noise configurations. Their initializations are also identical. The initial rotation error is the same as in Section~\ref{subsec::sim_att_obs}. The initial position and velocity errors in the XYZ-directions are $[25,25,25]^\top$ and $[-15,15,15]^\top$, respectively. Fig.~\ref{fig::sim_imu_lmk_obs} demonstrates the superior convergence of the proposed method. If the accelerometer inputs are corrupted by a constant bias, GES is preserved when the bias is simultaneously estimated. If, in addition, there is a constant gyroscope bias, we can only achieve semi-global stability with sufficient angular excitation to ensure the system is uniformly observable, in addition to the previous assumptions. 

\begin{figure}[t]
 \centering
 \parbox{3.3in}{
   \includegraphics[scale=0.65]{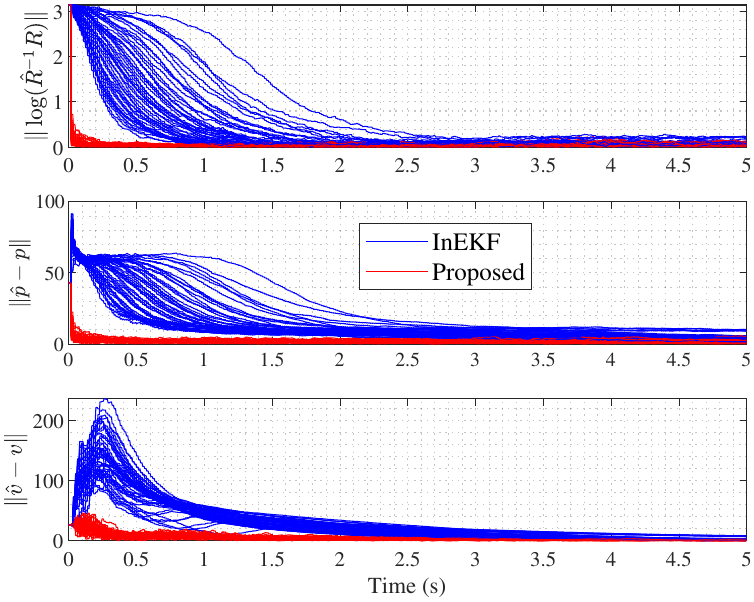}
 }
 \caption{Comparison of the proposed method with the InEKF for IMU-landmark navigation. 50 runs are conducted from the same initial condition.} 
 \label{fig::sim_imu_lmk_obs}
 \vspace{-0.5cm}
\end{figure}

\section{Conclusion}

We identify a class of LTV embeddable linear observed systems. We propose an observer framework consisting of a Kalman-like observer followed by optimization-based state reconstruction. GES is achieved provided that a suitable observability condition holds and a global minimum of the optimization is attained. The assumptions required for GES are quantitatively compared with those of the InEKF. Bias estimation with semi-global stability guarantees is discussed in detail. We apply the theory to two-frame systems. An interesting aspect is that the optimization-based state reconstruction is performed at each time step when the Kalman estimator provides an estimate of the embedded LTV, raising computational cost concerns. Future work includes developing saddle-escaping dynamics to alleviate the computational burden of state reconstruction.

\appendix

\subsection{Proof of Theorem~\ref{theorem::embedding_ltv}}\label{app::proof_embedding_ltv}

For a Type-1 system, the dynamics read $\dot\chi=f_u(\chi)=g_{\tilde{A}}(\chi)+\chi f_u(id)=\tilde{A}\chi-\chi (\tilde{A}-f_u(id))$ with measurement $y=\chi^{-1}d\in\mathbb{R}^{d_y}$. Using the pushforward of the linear group action, define $z_0:=\chi^{-1} d\in\mathbb{R}^{d_y}$ and calculate $\mathcal{L}_{f_u}h$ as
\begin{align*}
    \dot{z}_0&=-\chi^{-1}\dot\chi\chi^{-1}d=-\chi^{-1}(\tilde{A}\chi+\chi (f_u(id)-\tilde{A}))\chi^{-1}d\\
    &=-S_u(\chi^{-1} d)-\chi^{-1}(\tilde{A}d)=-S_uz_0-\chi^{-1}(\tilde{A}d). 
\end{align*}
Note that the measurement equation is $y=z_0$. Define $z_j:=\chi^{-1}(\tilde{A}^jd)$. Taking the time derivative along $f_u$, we obtain
\begin{align*}
    \dot{z}_j&=-\chi^{-1}\dot\chi\chi^{-1}(\tilde{A}^jd)=-\chi^{-1}(\tilde{A}\chi+\chi S_u)\chi^{-1}(\tilde{A}^jd)\\
    &=-S_u\chi^{-1}(\tilde{A}^jd)-\chi^{-1}(\tilde{A}^{j+1}d)=-S_u z_j-z_{j+1},
\end{align*}
with $S_u:=f_u(id)-\tilde{A}$. Note that we have used the fact that $\tilde{A}\in\mathfrak{g}$ does not depend on $u$ or time. It's natural to ask whether the inductive definition of $z_{j+1}$ from $z_j$ terminates after finitely many steps. As $\tilde{A}\in\mathfrak{g}$ is now viewed as a time-independent $\mathbb{R}^{d_y\times d_y}$ linear operator, by the Cayley-Hamilton theorem, there exist real constants $a_0,a_1,\dots,a_{d_y-1}$ such that
\begin{equation}
    \label{eq::linear_combination_A}
    \tilde{A}^{d_y}=a_{d_y-1}\tilde{A}^{d_y-1}+a_{d_y-2}\tilde{A}^{d_y-2}+\dots+a_1\tilde{A}+a_0I.
\end{equation} 
By the definition of the $z_j$, this directly yields
\begin{equation}
    \label{eq::linear_combination_z}
    z_{d_y}=a_{d_y-1}z_{d_y-1}+a_{d_y-2}z_{d_y-2}+\dots+a_1z_1+a_0z_0.
\end{equation}
To summarize, a closed set of equations for the variables $z_0,...,z_{d_y-1}$ is obtained with measurement $y=z_0$:
\begin{equation}
    \left\{\begin{aligned}
    \dot z_j&=-S_u z_j-z_{j+1},\ 0\le j\le d_y-2\\
    \dot z_{d_y-1}&=-S_u z_{d_y-1}-z_{d_y}=-S_u z_{d_y-1}-\sum_{l=0}^{d_y-1}a_lz_l.
    \end{aligned}\right..
\end{equation}
The above equation is linear time-varying, completing the embedding for Type-1 systems. Although the mechanics for Type-2 systems are analogous, we emphasize that to cancel the undesirable terms involving $\chi$ by multiplication as before, the group-affine dynamics should be adjusted to $g_{\tilde{A}}(\chi)+f_u(id)\chi$ and the measurement takes the form of a left linear action. Let $z_j:=\chi(\tilde{A}^jd)$. Taking derivatives along the new $f_u$ yields
\begin{align*}
    \dot{z}_j&=\dot\chi(\tilde{A}^jd)=-S_u\chi(\tilde{A}^jd)-\chi(\tilde{A}^{j+1}d)=-S_u z_j-z_{j+1}.
\end{align*}
with $S_u=-f_u(id)-\tilde{A}$. As the operator $\tilde{A}\in\mathfrak{g}$ satisfies the same constraint \eqref{eq::linear_combination_A}, this leads to the same relationship \eqref{eq::linear_combination_z} among $z_j$s for Type-2 systems. The embedded LTV system for Type-2 systems is the same as that for Type-1 systems, with the newly defined $S_u$ and the same coefficients $a_l$. Note the output equation is still $y=z_0$.
\endproof

\subsection{Proof of Lemma~\ref{lm::FH_observability}}\label{app::proof_FH_observability}

Consider the embedded LTV for the $i$-th output given by $\dot{z}^{(i)}=F_u^{(i)}z^{(i)}+C_u^{(i)},y^{(i)}=H^{(i)}z^{(i)}$. It suffices to prove that the pair $(F_u^{(i)},H^{(i)})$ is uniformly observable. The structures of $F_u^{(i)}$ and $H^{(i)}$ follow from \eqref{eq::embedded_LTV}. 

We first calculate the transition matrix $\Phi^F(t_2,t_1)$ of $F_u^{(i)}$. Let $F_u^{(i)}=\bar{A}+\bar{S}_u$, where $\bar{S}_u=\text{diag}(-S_u,\dots,-S_u)$ and
\begin{align*}
    \bar{A}=\begin{bmatrix}
        0 & -I & 0 & \cdots & 0\\
        0 & 0 & -I & \cdots & 0\\
        \vdots & \vdots & \vdots & \vdots & \vdots \\
        -\tilde{a}_0I & -\tilde{a}_1I & -\tilde{a}_2I & \cdots & -\tilde{a}_{d_y-1}I
    \end{bmatrix}\in\mathbb{R}^{d_y^2\times d_y^2}.
\end{align*}
Note that the subscript $u$ in $\bar{S}_u$ indicates its dependence on the input. It can be verified that $\bar{S}_u\bar{A}=\bar{A}\bar{S}_u$. The decomposition of $F_u^{(i)}$ into the sum of two commuting matrices is the standard technique in observability analysis \cite{hybrid_observer_landmark,hybrid_observer_vision_aided}. Let $\Psi(t)=\text{diag}(\psi(t),\dots,\psi(t))$. We claim that $\Phi^F(t_2,t_1)=\Psi(t_2)\Phi^{\bar{A}}(t_2,t_1)\Psi(t_1)^{-1}$, where $\Phi^{\bar{A}}(\cdot,\cdot)$ is the transition matrix of $\bar{A}$. It is clear that $\Phi^{F}(t_1,t_1)=I$ and $\Phi^F(t_2,t_1)=\Phi^F(t_1,t_2)^{-1}$. Computing the partial derivative $\frac{\partial\Phi^F}{\partial t_2}=\bar{S}_u\Phi^F+\Psi(t_2)\bar{A}\Phi^{\bar{A}}(t_2,t_1)\Psi^{-1}(t_1)=(\bar{S}_u+\bar{A})\Phi^F(t_2)$, thereby proving the claim. The constant matrix pair $(\bar{A},\bar{H})$ is observable, as $[\bar{H}^\top,(\bar{H}\bar{A})^\top,...,(\bar{H}\bar{A}^N)^\top]^\top$ is of full column rank, implying there exist $\delta,\alpha_3>0$ such that $\int_{t}^{t+\delta}\Phi^{\bar{A}}(\tau,t)^\top\bar{H}_\tau^\top R^{-1}\bar{H}_\tau\Phi^{\bar{A}}(\tau,t)d\tau\succeq\alpha_3I,\forall t$, hence $\int_t^{t+\delta}\Phi^{\bar{A}}(\tau,t)^\top\Phi^{\bar{A}}(\tau,t)d\tau\succeq \lambda_{\min}(R)I$. As $\mathcal{O}(t+\delta,t)=$
\begin{align*}
&\int_t^{t+\delta}\Phi^F(\tau,t)^\top H^{(i)\top}R_\tau^{-1}H^{(i)}\Phi^F(\tau,t) d\tau\\
&\succeq \frac{\alpha_1}{\alpha_2\lambda_{\max}(R)}\int_t^{t+\delta}\Phi^{\bar{A}}(\tau,t)^\top\Phi^{\bar{A}}(\tau,t)d\tau\succeq\frac{\alpha_1\alpha_3}{\alpha_2\lambda_{\max}(R)}I
\end{align*}
for each $t\in\mathbb{R}$, completing the proof.
\endproof

\subsection{Proof of Theorem~\ref{theorem::stability_non_bias}}\label{app::proof_stability_non_bias}
    We prove the result only for Type-1 systems. Let $\chi(t)$ and $\hat{\chi}(t)$ be the true and estimated trajectories on $G$, respectively. Let $z=\pi(\chi)$ and $\hat{z}=\pi(\hat{\chi})$ be the true and estimated states of the embedded LTV, where $\pi$ is the embedding map. Let $N=d_y-1$. Let $Z=[z^{(1)}_0,...,z^{(1)}_N,...,z^{(i)}_j,...,z^{(M)}_0,...,z^{(M)}_N]$ and $\hat{Z}=[\hat{z}^{(1)}_0,...,\hat{z}^{(1)}_N,...,\hat{z}^{(i)}_j,...,\hat{z}^{(M)}_0,...,\hat{z}^{(M)}_N]$ associated with the true and estimated states of the LTV. Let the system structure matrix be $D=[d^{(1)}_0,...,d^{(1)}_N,...,d^{(i)}_j,...,d^{(M)}_0,...,d^{(M)}_N]$. The matrices $Z$, $\hat{Z}$, and $D$ are in $\mathbb{R}^{d_y\times Md_y}$. 
    
    We first show that $d(\hat{\chi},\chi)$ can be bounded by $\Vert\hat{Z}-Z\Vert$. A well-known fact is that $\lambda_{\min}(S)\tr(K)\le\tr(KS)\le\lambda_{\max}(S)\tr(K)$, where $K\in\mathbb{S}_+$ and $S$ is a symmetric matrix of the same size as $K$. Hence, $\lambda_{\min}(DD^\top)\tr[(\chi^{-1}-\hat{\chi}^{-1})(\chi^{-1}-\hat{\chi}^{-1})^\top]\le\tr[DD^\top(\chi^{-1}-\hat{\chi}^{-1})(\chi^{-1}-\hat{\chi}^{-1})^\top]=\tr[(\chi^{-1}-\hat{\chi}^{-1})DD^\top(\chi^{-1}-\hat{\chi}^{-1})^\top]=\Vert\chi^{-1}D-\hat{\chi}^{-1}D\Vert^2=\Vert Z-\hat{\chi}^{-1}D\Vert^2$. Moreover, since $\hat{\chi}=\mathop{\arg\min}_{\chi'\in G}\Vert\hat{Z}-\chi'^{-1}D\Vert^2$, we have $\Vert\hat{Z}-\hat{\chi}^{-1}D\Vert\le\Vert\hat{Z}-Z\Vert$, since there exists $\chi\in G$ such that $Z=\chi^{-1}D$. This yields $\Vert Z-\hat{\chi}^{-1}D\Vert\le\Vert Z-\hat{Z}\Vert+\Vert\hat{Z}-\hat{\chi}^{-1}D\Vert\le 2\Vert\hat{Z}-Z\Vert$. By Assumption~\ref{assumption::structure_rank_condition}, we have $\sigma_{\min}(D)>0$, which implies $d(\hat{\chi},\chi)\le\frac{2}{\sigma_{\min}(D)}\Vert\hat{Z}-Z\Vert$.
    
    We now show that $\hat{Z}$ converges to $Z$ exponentially from any initial condition. This follows the standard proof of the stability of the Kalman observer under uniform observability \cite{riccati_observer_pnp,position_range_riccati}, which we briefly sketch. By Assumption~\ref{assumption::lower_boundedness_transition}, $P$ is uniformly bounded, i.e., $p_mI\preceq P\preceq p_MI$ \cite{boundedness_riccati_ode}. Let $\tilde{z}=\hat{z}-z$ denote the error state, which satisfies $\dot{\tilde z}=(F_u-KH)\tilde z$. Define $V(\tilde z):=\tilde{z}^\top P^{-1}\tilde{z}$. We obtain $\frac{1}{p_M}\Vert\tilde{z}\Vert^2\le V\le \frac{1}{p_m}\Vert\tilde{z}\Vert^2$ and $V\rightarrow+\infty$ as $\Vert\tilde z\Vert\rightarrow+\infty$. To prove uniform GES, it suffices to show that $\dot{V}\le-\frac{\lambda_{\min}(Q)}{p_M^2}\Vert\tilde{z}\Vert^2$. There exist constants $c_1,c_2>0$ such that $\Vert\hat{Z}(t)-Z(t)\Vert\le c_1\Vert\hat{Z}(t_0)-Z(t_0)\Vert\mathrm{e}^{-c_2(t-t_0)}$. Since $c_3d(\hat{\chi}(t_0),\chi(t_0))\ge\Vert\hat{Z}(t_0)-Z(t_0)\Vert$ for some $c_3>0$, the error metric satisfies $d(\hat{\chi}(t),\chi(t))\le\frac{2c_1c_3}{\sigma_{\min}(D)}d(\hat{\chi}(t_0),\chi(t_0))\mathrm{e}^{-c_2(t-t_0)}$, completing the proof.
\endproof

\subsection{Proof of Theorem~\ref{th::obs_comparison}}\label{app::proof_obs_comparison}
    We claim that $\mathcal{O}_2x=0$ for some $x\in\mathbb{R}^{\dim\mathfrak{g}}$ if and only if $\mathcal{L}(x)\tilde{A}^kd^{(i)}=0$ for $0\le k\le d_y-1$ and $1\le i\le M$. First, we prove the necessity. $\mathcal{O}_2x=0$ implies that $\mathcal{L}^\dagger(d^{(i)})ad_{\mathcal{L}^{-1}(\tilde{A})}^jx=0$ for $0\le j\le\dim\mathfrak{g}-1$ and $1\le i\le M$. For $k=0$, we have $\mathcal{L}(x)\tilde{A}^0d^{(i)}=-\mathcal{L}^\dagger(d^{(i)})x=0$. We proceed by induction on $k$. Suppose that $\mathcal{L}(x)\tilde{A}^jd^{(i)}=0$ for $0\le j\le k-1$. The $k$-th order block of $\mathcal{O}_2$ yields $0=\mathcal{L}^\dagger(d^{(i)})ad_{\mathcal{L}^{-1}(\tilde{A})}^kx=-\mathcal{L}(ad^k_{\mathcal{L}^{-1}(\tilde{A})}x)d^{(i)}=-\mathcal{L}([\mathcal{L}^{-1}(\tilde{A}),ad_{\mathcal{L}^{-1}(\tilde{A})}^{k-1}x])d^{(i)}=-[\tilde{A},\mathcal{L}(ad_{\mathcal{L}^{-1}(\tilde{A})}^{k-1}x)]d^{(i)}=-\tilde{A}\mathcal{L}(ad_{\mathcal{L}^{-1}(\tilde{A})}^{k-1}x)d^{(i)}+\mathcal{L}(ad_{\mathcal{L}^{-1}(\tilde{A})}^{k-1}x)\tilde{A}d^{(i)}=\cdots=\sum^{j_1\ge 1,j_2\ge 0}_{j_1+j_2=k} a_{j_1,j_2}\tilde{A}^{j_1}\mathcal{L}(x)\tilde{A}^{j_2}d^{(i)}+(-1)^{k}\mathcal{L}(x)\tilde{A}^kd^{(i)}$. By the induction hypothesis, we have $\mathcal{L}(x)\tilde{A}^{k}d^{(i)}=0$, hence the claim holds for all $0\le k\le d_y-1$. Sufficiency follows from the above equation, which expresses $\mathcal{L}(x)\tilde{A}^{k}d^{(i)}$ as a sum of lower-order terms.

    Now assume that $\text{rank}(\mathcal{O}_1)$ is full. Suppose that $\text{rank}(\mathcal{O}_2)$ is not full. Then there exists a nonzero $x\in\mathbb{R}^{\dim\mathfrak{g}}$ satisfying $\mathcal{O}_2x=0$, which implies $\mathcal{L}(x)\tilde{A}^kd^{(i)}=0$ by our claim. Hence, $\mathcal{L}(x)\mathcal{O}_1=0$. Since the column rank of $\mathcal{O}_1$ is full, this implies that the linear operator $\mathcal{L}(x)=0$. Since $\mathcal{L}$ is an isomorphism, $x=0$, contradicting the assumption that $x\neq0$. We conclude that $\text{rank}(\mathcal{O}_2)$ is full. 
    
    Now assume that $\text{rank}(\mathcal{O}_2)$ is full and the addition condition holds. Suppose that the column rank of $\mathcal{O}_1$ is not full. Let $S=\text{span}\mathcal{O}_1\subsetneq\mathbb{R}^{d_y}$, which is $\tilde{A}$-invariant by the Cayley-Hamilton theorem. Since $MS=0$, we have $M\tilde{A}^kd^{(i)}=0$ for $0\le k\le d_y-1$ and $1\le i\le M$. Thus $\mathcal{O}_2\mathcal{L}^{-1}(M)=0$, contradicting the fact that $\text{rank}(\mathcal{O}_2)$ is full. Hence, $\text{rank}(\mathcal{O}_1)$ is full.
\endproof

\subsection{Proof of Lemma~\ref{lm::uniform_observability}}\label{app:proof_uniform_observability}
    In this proof, we use the spectral norm for matrices. Since we consider only one landmark, we omit the index $i$. We first introduce a uniformly bounded invertible state transformation $\eta_j=\psi^{-1}z_j$. Define $\Lambda(b,t):=\psi^{-1}\mathcal{L}(b)\psi$. Using \eqref{eq::embedded_system_bias} for Type-1 embeddable systems, we obtain
    \begin{align}
        \label{eq::transformed_bias_sys}
        \left\{\begin{aligned}
        \dot{\eta}_j&=-\psi^{-1}\dot{\psi}\psi^{-1}z_j+\psi^{-1}\dot{z}_j=\psi^{-1}(\dot{z}_j+S_uz_j)\\
        &=\psi^{-1}(\mathcal{L}(b)z_j-z_{j+1})=\Lambda(b,t)\eta_j-\eta_{j+1}\\
        \dot{\eta}_{N-1}&=\Lambda(b,t)\eta_{N-1}-\sum_{l=0}^{N-1}a_l\eta_l\\
        \dot{b}&=0, \quad y=\psi\eta_0
        \end{aligned}\right.,
    \end{align}
    for $0\le j\le N-2$. With $\eta=[\eta_0^\top...,\eta_j^\top,...,\eta_{N-1}^\top]^\top$, it suffices to establish the uniform observability of \eqref{eq::transformed_bias_sys}. The dynamics are $\dot{\eta}=(\bar{A}+\Delta(b,t))\eta$, where we introduce the notation $\Delta(b,t):=\text{diag}(\Lambda(b,t),...,\Lambda(b,t))$ and the constant matrix
    \begin{equation}
        \bar{A}=\begin{bmatrix}
        0 & -I & 0 & \cdots & 0\\
        0 & 0 & -I & \cdots & 0\\
        \vdots & \vdots & \vdots & \vdots & \vdots \\
        -\tilde{a}_0I & -\tilde{a}_1I & -\tilde{a}_2I & \cdots & -\tilde{a}_{d_y-1}I
    \end{bmatrix}.
    \end{equation}
    Define the extended state $\xi:=[\eta^\top,b^\top]^\top$. We calculate the Jacobians $\delta\dot{\xi}=F_\xi\delta\xi,\ \delta y=H_\xi\delta\xi$ of \eqref{eq::transformed_bias_sys} at $\hat{\xi}$. We obtain
    \begin{equation}
        F_\xi=\begin{bmatrix}
            \bar{A}+\Delta(\hat{b},t) & \bar{\Gamma}(t)\\
            0 & 0
        \end{bmatrix},\ H_\xi=\begin{bmatrix}
            \psi(t) H_\eta & 0
        \end{bmatrix},
    \end{equation}
    where $H_\eta=[I_{d_y},0,...,0]\in\mathbb{R}^{d_y\times d_y^2}$ and
    \begin{equation}
        \bar{\Gamma}(t)=\begin{bmatrix}
            \psi^{-1}\mathcal{L}^{\ddagger}(z_0^{(i)})\\
            \vdots\\
            \psi^{-1}\mathcal{L}^{\ddagger}(z_{N-1}^{(i)})
        \end{bmatrix}=\text{diag}(\psi^{-1},...,\psi^{-1})\Gamma(t).
    \end{equation}
    The observability Gramian $\mathcal{O}(t+\delta,t)$ takes the form
    \begin{equation*}
        \begin{bmatrix}\mathcal{O}_{\eta\eta} & \mathcal{O}_{\eta b}\\
        \mathcal{O}_{\eta b}^\top & \mathcal{O}_{bb}    
        \end{bmatrix}=\int_t^{t+\delta}\Phi_\xi^\top(\tau,t)H_\xi(\tau)R_\tau^{-1}H_\xi(\tau)\Phi_\xi(\tau,t)d\tau.
    \end{equation*}
    Let $F_\eta=\bar{A}+\Delta(\hat{b},t)$, then we have the bound
    \begin{equation}
        \label{eq::bound_F_eta}
        \Vert F_\eta\Vert\le\Vert\bar{A}\Vert+\Vert\psi^{-1}\mathcal{L}(\hat{b})\psi\Vert\le\Vert\bar{A}\Vert+\sqrt{\frac{\alpha_2}{\alpha_1}}\mu_2=c_\eta.
    \end{equation}
    Decomposing the transition matrix $\Phi_\xi$ of $F_\xi$ into blocks, we immediately obtain the bound $e^{-c_\eta(\tau-t)}\le\Vert\Phi_{\eta\eta}(\tau,t)\Vert\le e^{c_\eta(\tau-t)}$ using \eqref{eq::bound_F_eta} and the relationship
    \begin{equation*}
        \Phi_\xi=\begin{bmatrix}
            \Phi_{\eta\eta} & \Phi_{\eta b}\\
            \Phi^\top_{\eta b} & I
        \end{bmatrix},\ \Phi_{\eta b}(\tau,t)=\int_t^{\tau}\Phi_{\eta\eta}(\tau,s)\bar{\Gamma}(s)ds.
    \end{equation*}
    We now estimate the minimum eigenvalue of $\mathcal{O}_{\eta\eta},\mathcal{O}_{bb}$ and the norm of $\mathcal{O}_{\eta b}$ as follows
    \begin{align*}
        \mathcal{O}_{\eta\eta}&=\int_t^{t+\delta}\Phi^\top_{\eta\eta}(\tau,t)H_\eta^\top\psi^\top(\tau)R_\tau^{-1}\psi(\tau)H_\eta\Phi_{\eta\eta}(\tau,t)\\
        &\succeq\frac{\alpha_1}{\lambda_{\max}(R)}\int_t^{t+\delta}\Phi^\top_{\eta\eta}(\tau,t)\Phi_{\eta\eta}(\tau,t)d\tau\succeq\frac{\alpha_1\delta e^{-2\delta c_\eta}}{\lambda_{\max}(R)} I,\\
        \mathcal{O}_{bb}&=\int_t^{t+\delta}\Phi^\top_{\eta b}(\tau,t)H^\top_\eta\psi^\top(\tau)R_\tau^{-1}\psi(\tau)H_\eta\Phi_{\eta b}(\tau,t)d\tau\\
        &\succeq\frac{\alpha_1}{\lambda_{\max}(R)}\int_t^{t+\delta}\int_t^\tau\int_t^\tau\bar{\Gamma}^\top(s_1)\Phi^\top_{\eta\eta}(\tau,s_1)\cdots\\
        &\quad\cdots\Phi_{\eta\eta}(\tau,s_2)\bar{\Gamma}(s_2) d\tau ds_1ds_2\succeq\frac{\alpha_1}{\lambda_{\max}(R)}\cdots\\
        &\quad e^{-2\delta c_\eta}\int_{t}^{t+\delta}\int_t^{\tau}\int_{t}^{\tau}\bar{\Gamma}^\top(s_1)\bar{\Gamma}(s_2)d\tau ds_1ds_2\\
        &\succeq\frac{\alpha_1}{\lambda_{\max}(R)}e^{-2\delta c_\eta}\frac{(\gamma_1/\alpha_2)\delta^3}{3} I,\\
        \left\Vert\mathcal{O}_{\eta b}\right\Vert&=\left\Vert\int_t^{t+\delta}\Phi_{\eta\eta}^\top(\tau,t)H_\eta^\top\psi^{\top}_\tau R_\tau^{-1}\psi_\tau H_\eta\Phi_{\eta b}(\tau,t)d\tau\right\Vert\\
        &\le\frac{\alpha_2\beta e^{2\delta c_\eta}}{\lambda_{\min}(R)\sqrt{\alpha_1}} \int_t^{t+\delta}(\tau-t)d\tau=\frac{\alpha_2\delta^2\beta e^{2\delta c_\eta}}{2\lambda_{\min}(R)\sqrt{\alpha_1}}.
    \end{align*}
    A sufficient condition for the positive-definiteness of $\mathcal{O}$ is $\lambda_{\min}(\mathcal{O}_{\eta\eta})\lambda_{\min}(\mathcal{O}_{bb})>\Vert\mathcal{O}_{\eta b}\Vert^2$, that is,
    \begin{equation*}
        \frac{\alpha_1\delta}{\lambda_{\max}(R)} e^{-2\delta c_\eta}\frac{\alpha_1\gamma_1\delta^3}{3\alpha_2\lambda_{\max}(R)} e^{-2\delta c_\eta}>\frac{\alpha^2_2\delta^4\beta^2 e^{4c_\eta\delta}}{4\alpha_1\lambda^2_{\min}(R)},
    \end{equation*}
    which is exactly the excitation condition (4). We have proved uniform obseravbility for the transformed system. Since the transformation is uniformly bounded, so is the original system. 
\endproof

\subsection{Proof of Theorem~\ref{theorem::stability_bias}}\label{app::proof_stability_bias}
    Consider a Type-1 system with bias. Since the true $\chi$ evolves in a compact set $\mathcal{G}_1$, $z:=\pi(\chi)$, where $z_j^{(i)}:=\chi^{-1}\tilde{A}^jd^{(i)}$, lies in a compact set $\mathcal{Z}$ of the embedded space $\mathbb{R}^{Md_y^2}$. By Assumption~\ref{assumption::structure_rank_condition}, we have $d(\hat{\chi},\chi)\le c_1\Vert\hat{Z}-Z\Vert$ for some $c_1>0$. It suffices to show that there exists a compact subset $\hat{\mathcal{Z}}\times\hat{\mathcal{B}}\subset\mathbb{R}^{Md_y^2}\times\mathbb{R}^{\dim\mathfrak{g}}$ such that $[\hat{z}^\top,\hat{b}^\top]^\top$, initialized in $\text{int}(\mathcal{Z}\times\mathcal{B})$, with $P(t_0)=P_0$, remains in $\hat{\mathcal{Z}}\times\hat{\mathcal{B}}$ and that $[\hat{z}^\top,\hat{b}^\top]^\top$ converges exponentially to $[z^\top,b^\top]^\top$ after some finite time. To apply \cite[Th. 1]{semi_global_ekf} and verify its conditions, we verify that (1) the Jacobians are uniformly observable, and thus persistently determinable since the transition matrix is uniformly bounded; (2) $[z(t)^\top,b^\top]^\top$ is bounded; (3) the linearization error of the dynamics is given by $\varphi([z^\top,b^\top]^\top,[\hat{z}^\top,\hat{b}^\top]^\top):=F_u(z-\hat{z})-F_bz+F_{\hat{b}}\hat{z}-\breve{F}[(\hat{z}-z)^\top,(\hat{b}-b)^\top]^\top$, where the matrices are from \eqref{eq::embedded_system_bias} and $\breve{F}$ denotes the Jacobian, thus $\Vert\varphi\Vert$ is bounded by a quadratic function of $\Vert\hat{z}-z\Vert$ and $\Vert\hat{b}-b\Vert$. By \cite[Th. 1]{semi_global_ekf}, we obtain a semi-global stability result in joint bias estimation.
\endproof

\subsection{Proof of Lemma~\ref{lm::umeyama}}\label{app::proof_lm_umeyama}
    Let the cost function be $J(R,W)$. We first decouple the vector part from $J$ by completing the square, as follows:
    \begin{align*}
        J&=\tr\left[(\hatbar{Z}-R^{-1}(\bar D-W\underline{D}))(\hatbar{Z}-R^{-1}(\bar D-W\underline{D}))^\top\right]\\
        &=\tr\bigl[(\hatbar{Z}-R^{-1}\bar{D})(\hatbar{Z}-R^{-1}\bar{D})^\top+2R^{-1}W\underline{D}(\hatbar{Z}\\
        &\quad -R^{-1}\bar{D})^\top+R^{-1}W\underline{D}\underline{D}^\top W^\top R\bigl]\\
        &=\tr\biggl[\left(R^{-1}W\underline{D}\underline{D}^\top+(\hatbar{Z}-R^{-1}\bar{D})\underline{D}^\top\right)(\underline{D}\underline{D}^\top)^{-1}\\
        &\quad \left(R^{-1}W\underline{D}\underline{D}^\top+(\hatbar{Z}-R^{-1}\bar{D})\underline{D}^\top\right)^\top\biggl]+\tr\biggl[\left(\hatbar{Z}-R^{-1}\bar{D}\right)\\
        &\quad \left(I_{MN\times MN}-\underline{D}^\top(\underline{D}\underline{D}^\top)^{-1}\underline{D}\right)\left(\hatbar{Z}-R^{-1}\bar{D}\right)^\top\biggl].
    \end{align*}
    Since the first term is nonnegative, the global minimum $(R^*,W^*)$ of $J$ is achieved when
    \begin{align}
        \label{eq::decouple_condition}
        W^*=(\bar{D}-R^*\hatbar{Z})\underline{D}^\top(\underline{D}\underline{D}^\top)^{-1}.
    \end{align}
    This implies that we can minimize the second term with repect to $R$ only, and substitute the $R$ that achieves the global minimum of the second term  into \eqref{eq::decouple_condition}. Let $\bar{L}\bar{L}^\top=I_{MN\times MN}-\underline{D}^\top(\underline{D}\underline{D}^\top)^{-1}\underline{D}$ be the Cholesky decomposition. It suffices to consider the optimization problem
    \begin{align}
        \label{eq::rotation_only_opt}
        \mathop{\min}_{R\in\SO(d)}\left\Vert\hatbar{Z}\bar{L}-R^{-1}\bar{D}\bar{L}\right\Vert^2.
    \end{align}
    Expanding $\tilde J$ using the properties of the trace yields
    \begin{align}
        \tilde{J}=\Vert\hatbar{Z}\bar{L}\Vert^2+\Vert\hatbar{D}\bar{L}\Vert^2-2\tr\left(\hatbar{Z}\bar{L}\bar{L}^\top\bar{D}^\top R\right).
    \end{align}
    By the conditions of Lemma~\ref{lm::umeyama}, let $\bar{U}\Lambda\bar{V}^\top$ be a singular value decomposition of $\hatbar{Z}\bar{L}\bar{L}^\top\bar{D}^\top$ with singular values $\Lambda:=\text{diag}(\sigma_1,...,\sigma_d)$ in decreasing order. Using the results of \cite{umeyama_algorithm}, the global optimum $R^*$ of \eqref{eq::rotation_only_opt}, and thus \eqref{eq::umeyama_optimization} is given by
    \begin{equation}
        \label{eq::optimum_rotation}
        R^*=\bar{V}\bar{S}\bar{U}^\top.
    \end{equation}
    Substituting \eqref{eq::optimum_rotation} into \eqref{eq::decouple_condition}, we obtain $W^*=(\bar{D}-\bar{V}\bar{S}\bar{U}^\top\hatbar{Z})\underline{D}^\top(\underline{D}\underline{D}^\top)^{-1}$, which completes the proof.
\endproof

\subsection{Proof of Proposition~\ref{prop::vector_bias}}\label{app::proof_vector_bias}
    If $b_\omega=0$, define $b_\rho^{(i,j)}:=b_\rho\underline{d}_j^{(i)}\in\mathbb{R}^{d}$, the embedded system becomes
    \begin{equation}
        \label{eq::vector_bias_only_sys}
        \left\{\begin{aligned}
        \dot{\bar{z}}^{(i)}_j&=-\omega_t^\times\bar{z}^{(i)}_j-\rho_t\underline{d}_j^{(i)}-b_\rho^{(i,j)}-\bar{z}_{j+1}^{(i)}\\
        \dot{\bar{z}}_{N-1}^{(i)}&=-\omega_t^\times\bar{z}^{(i)}_{N-1}-\rho_t\underline{d}_{N-1}^{(i)}-b_\rho^{(i,N-1)}-\sum_{l=0}^{N-1}\tilde a_l\bar{z}_l^{(i)}\\
        \dot{b}^{(i,j)}_\rho&=0,\ \bar{y}^{(i)}=\bar{z}_0^{(i)},\ i\in[1,M],\ j\in[0,N-2]
        \end{aligned}\right.,
    \end{equation}
    which remains an LTV system. As in the proof of Lemma~\ref{lm::FH_observability}, the off-diagonal part of \eqref{eq::vector_bias_only_sys} is Kalman observable, and the diagonal part is uniformly observable. Hence, \eqref{eq::vector_bias_only_sys} is uniformly observable. The Kalman observer for this system is thus globally exponentially stable. The full rank of $D$ enables the reconstruction of the $\TFG(d,n,m)$ state. For the bias, using the estimates $\hat{b}_{\rho}^{(i,j)}$, we have $b_\rho[...\underline{d}_j^{(i)}...]=[...\hat{b}_{\rho}^{(i,j)}...]$. Since the column rank of $D$ is full, $\text{rank}([...\underline{d}_j^{(i)}...])=n+m$. We can obtain an estimate of the bias through solving this linear equation. As in the proof of Theorem~\ref{theorem::stability_non_bias}, the bias estimation error converges to zero exponentially.
\endproof

\subsection{Proof of Proposition~\ref{proposition::range_immersion}}\label{app::proof_range_immersion}
    It suffices to show that differentiating $s_{j,k}^{(i)}$ does not produce undesirable terms that are not in $\bar{z}$ or $s$. First, for $0\le j\le k\le N-2$, we have $
    \dot{s}^{(i)}_{j,k}=\frac{1}{2}\dot{\bar{z}}_j^{(i)\top}\bar{z}^{(i)}_k+\frac{1}{2}\bar{z}_j^{(i)\top}\dot{\bar{z}}^{(i)}_k=\frac{1}{2}(-\omega_t^\times\bar{z}^{(i)}_j-\rho_t\underline{d}^{(i)}_j-\bar{z}^{(i)}_{j+1})^\top\bar{z}^{(i)}_k+\frac{1}{2}\bar{z}_j^{(i)\top}(-\omega_t^\times\bar{z}^{(i)}_k-\rho_t\underline{d}^{(i)}_k-\bar{z}^{(i)}_{k+1})=-\frac{1}{2}(\rho_t\underline{d}^{(i)}_j)^{\top}\bar{z}^{(i)}_k-s^{(i)}_{j+1,k}-\frac{1}{2}(\rho_t\underline{d}_k^{(i)})^\top\bar{z}^{(i)}_j-s^{(i)}_{j,k+1}$. Due to the skew-symmetric structure of $\omega_t^\times$, the terms $\frac{1}{2}(-\omega_t^\times\bar{z}_j^{(i)})^\top\bar{z}_k^{(i)}$ and $\frac{1}{2}\bar{z}_j^{(i)\top}(-\omega_t^\times\bar{z}_k^{(i)})$ cancel. Otherwise, these two terms and their subsequent derivatives would produce many new terms, preventing finite termination. The same reasoning applies to the cases $0\le j\le k=N-1$ and $j=k=N-1$. The coefficients $\tilde{a}_\iota$ arise from the dynamics $f_1$.
    
    Thus, the embedding of $f_2$ in \eqref{eq::ltv_range_immersion} for two-frame systems \eqref{eq::tfg_case1_systems} with range measurements is given by $\dot{s}^{(i)}_{j,k}=-\frac{1}{2}(\rho_t\underline{d}^{(i)}_j)^{\top}\bar{z}^{(i)}_k-s^{(i)}_{j+1,k}-\frac{1}{2}(\rho_t\underline{d}_k^{(i)})^\top\bar{z}^{(i)}_j-s^{(i)}_{j,k+1}$, for $0\le j\le k\le N-2$; $\dot{s}^{(i)}_{j,N-1}=-\frac{1}{2}(\rho_t\underline{d}_j^{(i)})^\top\bar{z}^{(i)}_{N-1}-s^{(i)}_{j+1,N-1}-\frac{1}{2}(\rho_t\underline{d}_{N-1}^{(i)})^\top\bar{z}^{(i)}_j-\sum_{\iota=0}^{N-1}\tilde{a}_\iota s_{j,\iota}$, for $0\le j\le N-2$; and $\dot{s}^{(i)}_{N-1,N-1}=-(\rho_t\underline{d}_{N-1}^{(i)})^\top\bar{z}^{(i)}_{N-1}-\sum\limits_{\iota=0}^{N-1}2\tilde{a}_\iota s_{\iota,N-1}$. 
\endproof


\section*{References}
\bibliographystyle{IEEEtran}
\bibliography{IEEEabrv, refs.bib}

\begin{IEEEbiography}
[{\includegraphics[width=1in, height=1.25in, clip, keepaspectratio]{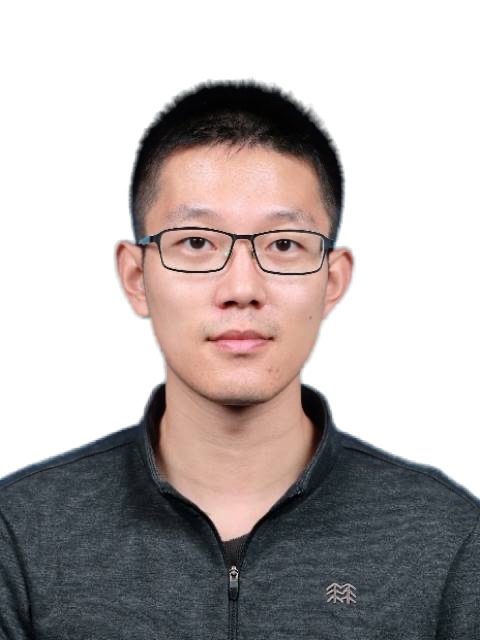}}]
    {Changwu Liu} received the B.E. degree in mechanics and the Ph.D. degree in aerospace engineering from Tsinghua University, Beijing, China, in 2018 and 2024, respectively.
	
    He is currently a Post-Doctoral Researcher in the Department of Electronic Engineering, Tsinghua University. His research interests include state estimation, nonlinear observers, geometric control methods and their various applications in robotics.
\end{IEEEbiography}

\vspace{-0.7cm}

\begin{IEEEbiography}
[{\includegraphics[width=1in, height=1.25in, clip, keepaspectratio]{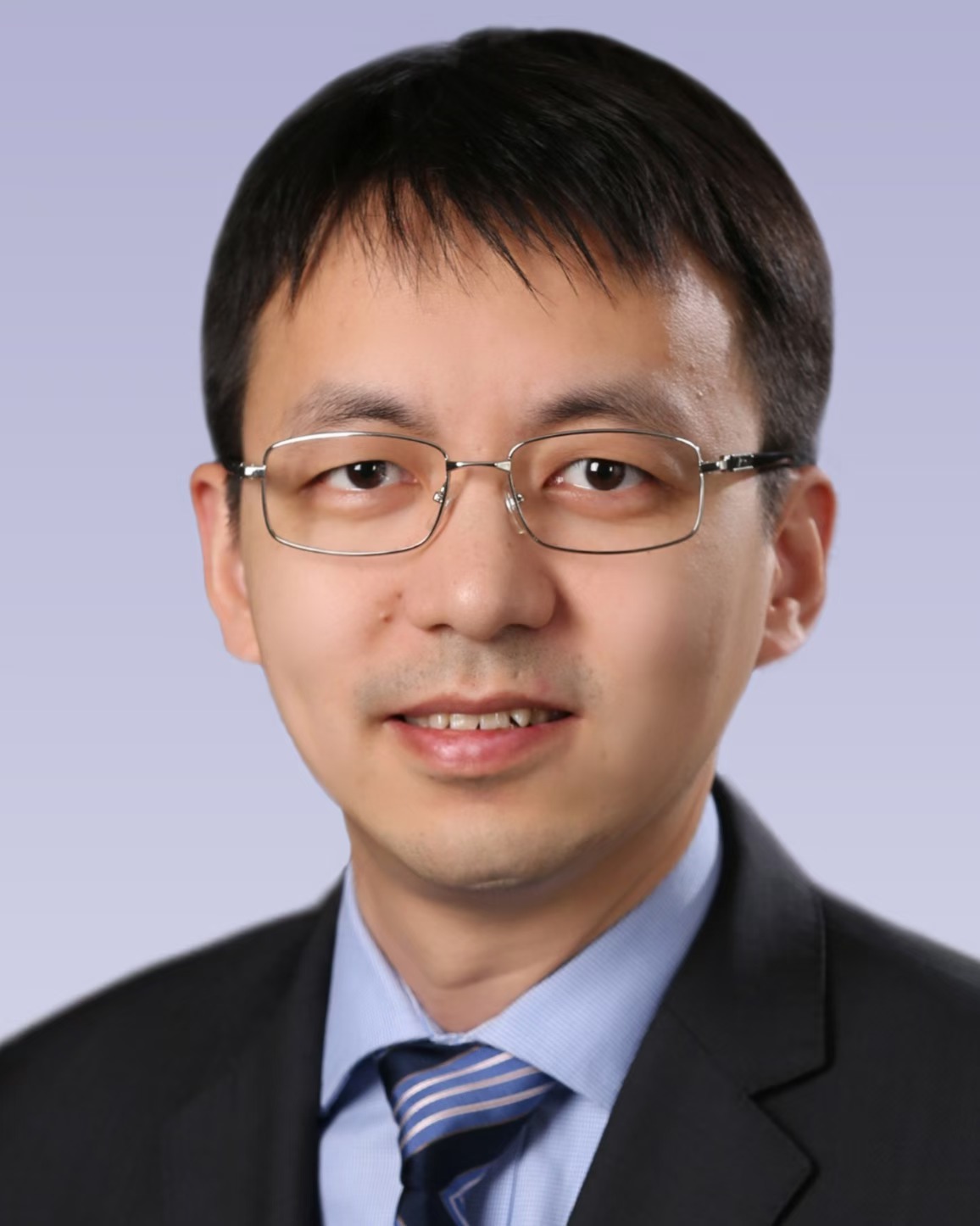}}]
    {Yuan Shen} (Senior Member, IEEE) received the B.E. degree in electronic engineering from Tsinghua University in 2005 and the S.M. and Ph.D. degrees in electrical engineering and computer science from Massachusetts Institute of Technology (MIT) in 2008 and 2014, respectively. 

    He is currently a Full Professor with the Department of Electronic Engineering, Tsinghua University. His research interests include network localization and navigation, integrated sensing and control, and multi-agent systems. His papers have received the IEEE ComSoc Fred W. Ellersick Prize and several best paper awards from IEEE conferences. He has served as the TPC Symposium Co-Chair for IEEE ICC and IEEE Globecom for several times. He was the Elected Chair of the IEEE ComSoc Radio Communications Committee from 2019 to 2020. He is currently an Editor of the {\textsc{IEEE Transactions on Signal Processing}}, \textsc{IEEE Transactions on Communications}, \textsc{IEEE Transactions on Network Science and Engineering}, and \textit{China Communications.}
\end{IEEEbiography}

\end{document}